\colorlet{MyBlue}{DodgerBlue!60!Black}
\colorlet{MyGreen}{DarkGreen!85!Black}
\numberwithin{equation}{section}  
\crefname{app}{Appendix}{Appendices}
\newcommand{\debug}[1]{#1}
\theoremstyle{plain}
\newtheorem{theorem}{Theorem}
\newtheorem{corollary}[theorem]{Corollary}
\newtheorem{lemma}[theorem]{Lemma}
\newtheorem{claim}[theorem]{Claim}
\theoremstyle{definition}
\newtheorem{definition}[theorem]{Definition}
\theoremstyle{remark}
\newtheorem{remark}[theorem]{Remark}
\numberwithin{theorem}{section}
\DeclarePairedDelimiter{\braces}{\{}{\}}
\DeclarePairedDelimiter{\bracks}{[}{]}
\DeclarePairedDelimiter{\parens}{(}{)}
\DeclarePairedDelimiter{\infloor}{\lfloor}{\rfloor}
\DeclarePairedDelimiter{\floor}{\lfloor}{\rfloor}
\DeclarePairedDelimiterX{\braket}[2]{\langle}{\rangle}{#1,#2}
\DeclarePairedDelimiterX{\inner}[2]{\langle}{\rangle}{#1,#2}
\DeclarePairedDelimiterX{\setdef}[2]{\{}{\}}{#1:#2}
\DeclarePairedDelimiterXPP{\probof}[1]{\Prob}{(}{)}{}{%

#1}
\DeclarePairedDelimiterXPP{\exof}[1]{\Expect}{[}{]}{}{%

#1}
\DeclareMathOperator{\expo}{\debug {e}}
\newcommand{\naturals}{\mathbb{\debug N}}
\newcommand{\x}{\boldsymbol{x}}
\newcommand{\y}{\boldsymbol{y}}
\newcommand{\distr}{\debug F}
\DeclareMathOperator{\Expect}{\debug{\mathsf{E}}}
\DeclareMathOperator{\Prob}{\debug{\mathsf{P}}}
\DeclareMathOperator{\Exp}{\mathsf{\debug{Exp}}}
\DeclareMathOperator{\Poisson}{\mathsf{\debug{Poisson}}}
\newcommand{\Cost}{\debug C}
\newcommand{\Costtype}{\debug \Cost_{\type}}
\newcommand{\CostA}{\debug \Cost_{\typeA}}
\newcommand{\CostB}{\debug \Cost_{\typeB}}
\newcommand{\argdot}{\,\cdot\,}
\newcommand{\ie}{i.e., }
\newcommand{\eg}{e.g., }
\newcommand{\iid}{i.i.d.\ }
\newcommand{\exphit}{\debug h}
\newcommand{\benefit}{\debug \varphi} 
\newcommand{\benefitB}{\benefit_{\typeB}} 
\newcommand{\benefitn}{\debug \psi} 
\newcommand{\benefitnB}{\benefitn_{\typeB}} 
\newcommand{\Etime}{\debug E} 
\newcommand{\Htime}{\debug H} 
\newcommand{\Exittime}{\widehat{\debug E}} 
\newcommand{\Preward}{\debug P}
\newcommand{\hittingt}{\debug H}
\newcommand{\hitzero}{\debug \eta}
\newcommand{\hitzeroren}{\debug \kappa}
\newcommand{\statei}{\debug i}
\newcommand{\statej}{\debug j}
\newcommand{\Meq}{ \debug M_{\typeB}^{\equ}}
\newcommand{\nA}{\nAB_{\typeA}} 
\newcommand{\nB}{\nAB_{\typeB}} 
\newcommand{\nAB}{\debug n}
\newcommand{\NA}{\debug N_{\typeA}} 
\newcommand{\NB}{\debug N_{\typeB}} 
\newcommand{\optn}{\nAB^{\opt}}
\newcommand{\nunsol}{\nAB^{*}}
\newcommand{\naor}{\debug M_{\typeA}} 
\newcommand{\ncust}{\debug X} 
\newcommand{\ncustY}{\debug Y} 
\newcommand{\eqn}{\nAB^{\expo}}
\newcommand{\rate}{\debug \lambda} 
\newcommand{\rateA}{\debug \rate_{\typeA}} 
\newcommand{\rateB}{\debug \rate_{\typeB}} 
\newcommand{\rateS}{\debug \mu} 
\newcommand{\ratetype}{\debug \rate_{\type}} 
\newcommand{\reward}{\debug R} 
\newcommand{\rewardA}{\debug \reward_{\typeA}} 
\newcommand{\rewardB}{\debug \reward_{\typeB}} 
\newcommand{\rewardtype}{\reward_{\type}}
\newcommand{\rhsBjoins}{\debug \gamma}
\newcommand{\rtime}{\debug \tau} 
\newcommand{\round}{\debug k} 
\newcommand{\ttime}{\debug t} 
\newcommand{\type}{\debug \theta} 
\newcommand{\typeA}{\debug A} 
\newcommand{\typeB}{\debug B} 
\newcommand{\utilization}{\debug \rho}
\newcommand{\utilizationA}{\debug \rho_{\typeA}} 
\newcommand{\utilizationtype}{\debug \rho_{\type}} 
\newcommand{\Utime}{\debug U} 
\newcommand{\Veq}{ \debug V_{\typeB}} 
\newcommand{\Teq}{ \debug T_{\typeB}} 
\newcommand{\optA}{\debug M^{*}_{\typeA}}
\newcommand{\Wtime}{\debug W} 
\DeclareMathOperator{\equ}{\debug{eq}}
\DeclareMathOperator{\opt}{\debug{opt}}
\newacro{FCFS}{first-come first-served}
\newacro{LCFS}{last-come fist-served}
\newacro{BRD}{best response dynamics}
\newacro{NE}{Nash equilibrium}
\newacro{PNE}{pure Nash equilibrium}
\newacro{MNE}{mixed Nash equilibrium}
\newacro{PFNE}{prior-free Nash equilibrium}
\newacro{KKT}{Karush\textendash Kuhn\textendash Tucker}
\newacro{FIP}{finite improvement property}
\newacro{CLT}{central limit theorem}
\newacro{iid}[i.i.d.]{independent and identically distributed}
\begin{document}

\title{Strategic Queues with Priority Classes}

\author{Maurizio D'Andrea}
\author{Marco Scarsini}

\affil{Dipartimento di Economia e Finanza, Luiss University,
Viale Romania 32,
00197 
Roma,
Italy}

\maketitle

\begin{abstract}

We consider a strategic M/M/1 queueing model under a \acl{FCFS} regime, where customers are split into two classes and class $\typeA$ has priority over class $\typeB$. 
Customers can decide whether to join the queue or balk, and, in case they have joined the queue, whether and when to renege.
We study the equilibrium strategies and compare the equilibrium outcome and the social optimum in the two cases where the social optimum is or is not constrained by priority.

\noindent
\textbf{Keywords:}
M/M/1, \acl{FCFS}, Markov perfect equilibrium, social optimum, priority classes.


\end{abstract}

%
%
\section{Introduction}
\label{se:intro}

%
%
\subsection{The Problem}
\label{suse:problem}

Many real-life situations involves queues where customers make strategic decisions. 
For instance, they choose which queue to join at the supermarket;
they decide at what time to a restaurant who does not take reservations;
they decide when to hang up after being put on hold when calling a call center;
they choose whether to pay for a  status that guarantees priority when boarding planes, etc.
The analysis of strategic queueing models started with \citet{Nao:E1969}, who considered an M/M/1 queue under a \ac{FCFS} regime, where customers are homogeneous, receive a fixed reward for being served and pay a fixed cost for any unit of time spent in the system. 
Their decision is whether to join the queue or balk upon arrival.
He framed this problem as a game and showed that the unique equilibrium is given by a threshold strategy, \ie each customer joins the queue if and only if it is shorter than a certain threshold, which depends on the parameters of the model.
In this model, even if customers were allowed to renege, they would have no reason to do so. 

The homogeneity assumption is fundamental for Naor's result. Customers are often inhomogeneous and can differ in various respects. 
For instance, they could be split into classes that are associated with different priorities, that is, high-priority customers are always ahead of low priority-customers in the queue.
Therefore, low-priority customers get served only when there are no high-priority customers in the system.
In this case, the equilibrium of the game is more complicated than in the homogeneous case.
In particular, whereas the high-priority customers are unaffected by the presence of low-priority customers and behave like in the Naor's model, low-priority customers use a threshold strategy that depends on the numbers of customers of each type in the system.

%
%
\subsection{Our Contribution}
\label{suse:contribution}

In this paper we consider an M/M/1 queueing model under a \ac{FCFS} regime, where customers belong to either class $\typeA$ or class $\typeB$.
$\typeA$-customers have priority over $\typeB$-customers.
Within each class customers are homogeneous. 
Customers of the two classes have the same service rate, but differ for their arrival rates, their rewards, and their unit costs for being in the system. 
We first study the model where customers aim at maximizing their payoff (given what the other customers do).
To do so, when they arrive at the queue, they can decide whether to join it or to balk.
At any given time, if they are in the queue, they can also decide to renege.

We will prove existence of a unique equilibrium where $\typeA$-customers use a threshold strategy to decide whether to join the queue, and they never renege. 
They are oblivious to the presence of $\typeB$-customers, so they adopt a threshold strategy on the number of $\typeA$-customers in the queue, and the threshold is based exclusively on the parameters of their own class.
$\typeB$-customers adopt a threshold strategy too, but the threshold is on the total number of customers in the queue, and it depends also on the arrival rate of $\typeA$-customers.

We then move to considering the social optimum that can be achieved by a planner who receives the same reward as the served customers and pays the same waiting costs.
Despite this apparent alignment between the customers' and the planner's interests, the optimal strategy differs from the equilibrium strategy, because it takes into consideration the externalities that the presence of each customer in the queue creates. 
A planner who has no priority constraints will favor the class with the higher ratio of reward over waiting cost.
This class is not necessarily class $\typeA$.

Finally we study the social optimum when priority is enforced. 
We model this situation with one  planner for each class.
Each planner wants to maximize the social welfare of the class, without and social consideration for customers of the other class. 

What makes our model simultaneously interesting and complicated to handle is the coexistence of balking and reneging in the equilibrium strategy of  $\typeB$-customers.
Finding these equilibrium strategies involves the analysis of the probability of ruin in a bi-dimensional random walk. 
The structure of the problem allows us to reconduct this model to a uni-dimensional ruin problem.

%
%
\subsection{Related Literature}
\label{suse:related-literature}
In his seminal paper \citet{Nao:E1969} introduced the first model of strategic queueing theory. 
In his M/M/1 \ac{FCFS} model, upon arrival at a queue, customers  decide whether to join it or to balk.
They make their decision based on the parameters of the M/M/1 model, on the reward that they receive when served, and on the  cost they pay for each unit of time they stay in the system. 
This defines a dynamic game.
Naor proved that, if customers make their decisions according to a selfish criterion, \ie if they play a Nash equilibrium of the above game, the outcome is socially suboptimal, that is, a social planner could achieve a better outcome.

\citet{Has:E1985} proved that efficiency can be achieved at equilibrium, by using a different regime, \eg \ac{LCFS}.
Other regimes that achieve social optimality in equilibrium were studied, among others, by \citet{Wan:MS2016,HavOz:ORL2016,HavOz:QMSM2018,CheTer:arXiv2024,ScaShm:EC2024,ScaShm:arXiv2024}.
Related issues concerning learning and  strategic experimentation were considered, among others, by \citet{CriTho:TE2019,Mar:TE2025}.
In all these papers customers are assumed to be homogeneous and no priority is allowed.

Several papers studied queuing models with priority, where customers can pay to achieve a higher priority.
\citet{AdiYec:OR1974} considered a model where there are different priority classes and customers can pay a fee to be a member of one of these classes.  
This model was re-examined by 
\citet{HasHav:OR1997}, who showed that multiplicity of equilibria can occur.
\citet{HasHav:ORL2006} considered an M/M/1 model with two classes of customers, where the queue is unobservable and customers pay a fee (different for each class) to enter the system. 
The server knows the customers' classes and can set the fees.
\citet{HavRav:AOR2016} studied a strategic bidding system where customer bid for priority in a M/G/1 queue.

\citet{CheKul:Springer2006,CheKul:SM2007} studied an M/M/1 queueing  model with two priority classes where customers can decide whether to join a queue or balk. 
Reneging is not allowed. 
They compare the equilibrium outcome, the social optimum, and an intermediate situation where each priority class has a social planner who care only about the welfare of customers in that class. 
A similar model in the M/G/1 framework was considered by \citet{XuXuWan:JAMC2016,XuXuYao:JIMO2019}.
The model in  \citet{AfeSar:SSRN2026} is the closest to the one in our paper. 
They considered an M/M/1 \ac{FCFS} queueing regime with two priority classes where customers are assigned to a class, which they cannot choose.
In the first part of their paper, they studied the customers' equilibrium behavior, focusing on low-priority customers, given that high-priority customers use a threshold strategy, where the threshold is as in \citet{Nao:E1969}.
They parametrized the state in terms of the total number of customers and the number of low-priority customers in the system. 
This parametrization is different from our, but totally equivalent.
The expression that we obtain for the equilibrium threshold is simpler than the one in \citet{AfeSar:SSRN2026}.
In the second part of their paper, these authors  studied the role of pricing to control customers' balking and reneging.

The reader is referred to \citet{HasHav:Kluwer2003} for an extensive treatment of strategic queueing and to 
\citet{Has:CRC2016} for a detailed review of relevant contributions in the field.

Some papers considered classical non-strategic queueing models with priority classes. 
For instance, \citet{WanBarSch:OR2015} analyzed a  M/M/$c$ queueing model with two priority classes in a non-strategic framework. 
Their analysis requires the reduction of a bi-dimensional Markov chain to a uni-dimensional Markov chain. 
This is similar to what we have in our paper.
In the same framework, 
\citet{RasIngSan:IJC2022} developed algorithms for multiserver queueing systems with two customer classes, preemptive priorities, and reneging. 
To do this they used quasi-birth-death process.

%
%

\subsection{Organization of the Paper}
\label{suse:organization}

The general model is described in \cref{se:model}.
\cref{se:semi-strategic} deals with a variation of the model where only $\typeB$-customers act strategically.
The fully strategic model, where both types of customers act strategically is studied in \cref{se:fully-strategic}.
\cref{se:global-optimization} considers the strategy of a globally optimizing planner.
\cref{se:class-optimization} focuses on the case of two planners, one for each type of customers.

%
%

\subsection{Notation}
\label{suse:notation}
Given a real number $x$, the symbol $\infloor{x}$ indicates the greatest integer $n$ such that  $n\le x$. 
The notation $X\sim \distr$ means that
the random variable $X$ is distributed according to the distribution function $\distr$.



%
%

\section{The Model}
\label{se:model}

Before dealing with our priority-class model, we recall the classical contributions concerning strategic balking and reneging in an M/M/1 queueing model with homogeneous customers. 

\subsection{Homogeneous Customers}
\label{suse:single-class}

\citet{Nao:E1969} studied an M/M/1 \ac{FCFS} queueing model with reward for service and waiting costs.
More specifically:

\begin{itemize}
\item 
Customers arrive according to a Poisson process with parameter $\rate$.

\item 
Service times are \ac{iid} according to an exponential distribution with parameter $\rateS$

\item 
Each customer receives a reward $\reward$ upon service completion and incurs a cost $\Cost$ for each unit of time spent in the system.

\item 
Upon arrival at the queue each customer decides whether to join it or to balk.

\item 
A customer who joins the queue receives a  payoff 
\begin{equation}
\label{eq:stoch-time}
\reward - \Cost \Wtime,    
\end{equation}
where $\Wtime$ is the random time spent in the system by this customer.

\item 
A customer who balks receives a payoff equal to zero.

\item 
A customer who balks will never come back to the queue.

\end{itemize}

Customer Chantal joins queue with $\nAB$ customers if and only if her expected payoff is nonnegative, \ie
\begin{equation}
\label{eq:expect-time}
\reward - (\nAB+1)\frac{\Cost}{\rateS} \ge 0.    
\end{equation}
As a consequence, the integer 
\begin{equation}
\label{eq:Naor-threshold} 
\eqn \coloneqq \floor*{\frac{\reward\rateS}{\Cost}}
\end{equation}
is such that
\begin{equation}
\label{eq:expect-time-equil}
\reward - \eqn\frac{\Cost}{\rateS} \ge 0
\quad\text{and}\quad
\reward - (\eqn+1)\frac{\Cost}{\rateS} < 0.    
\end{equation}
Therefore, the threshold strategy according to which a customer who observes $\eqn$ or more customers in the queue balks is an equilibrium strategy.

\citet{Nao:E1969} considered a social planner who receives a reward $\reward$ for each completed service and pays a cost $\Cost$ for each time unit a customer spends in the queue.
Despite this apparent interest alignment of the customers and the planner, the optimum strategy is still a threshold strategy, but with a lower threshold. 
That is, the optimum threshold is $\optn=\floor*{x^*}$, where $x^*$ is the unique solution of
\begin{equation}
\label{eq:NaorOptEquation}
\frac{\reward\rateS}{\Cost}=\frac{x(1-\utilization)-\utilization(1-\utilization^x)}{(1-\utilization)^2},
\end{equation}
with $\utilization\coloneqq\rate/\rateS$.
The fact that $\optn \le \eqn$ is due to the fact that, selfishly behaving  customers, when they decide whether to join the queue or balk, do not take into account the generate negative externalities they generate on the  customers who will join the queue in the future. 
This implies that they join the queue more than what is socially optimal.

%
%

\subsection{Priority Model}
\label{subsec:PriorityModel}
We now consider an M/M/1 observable queuing system where customers can be of two types $\typeA$ and $\typeB$. 
Customers of type $\typeA$ have priority over customers of type $\typeB$.
So an $\typeA$-customer who arrives at the queue  overtakes all the existing $\typeB$-customers, even if a $\typeB$-customer is currently being served. 
Once the priority is established, the system follows a \ac{FCFS} regime. 
As previously mentioned, upon arrival each customer can observe the current state of the system.
Each customer of type $\type\in\{\typeA,\typeB\}$ incurs in a cost $\Costtype$ for each unit of time spent in the system, and receives a reward $\rewardtype$ upon service completion. 
The model can be formalized as follows:

\begin{enumerate}[label={\rm (\roman*)}, ref=(\roman*)]
\item A stream of customers $\typeA$ arrives according to a Poisson process with parameter $\rateA$.
    
\item A stream of customers $\typeB$ arrives according to a Poisson process with parameter $\rateB$.

\item 
The two Poisson processes are independent.

\item The service times of all customers  are \iid random variables having a exponential distribution with parameter $\rateS$.
\end{enumerate}
Moreover, for $\type\in\braces*{\typeA,\typeB}$ the symbol  $\utilizationtype\coloneqq\ratetype/\rateS$ denotes the utilization factor for $\type$-customers.   
To avoid trivial situations, we assume that $\rewardtype\rateS\geq\Costtype$.
If this inequality failed to be true, the system would remain permanently idle.

\begin{center}
\begin{tikzpicture}[>=stealth]

\node[circle, draw, inner sep=2pt] (B0) at (0,0) {$\typeB$} ;
\draw[->] (0,-0.5) -- ++(0,-1.5);
\filldraw (0,-1.2) node[anchor=east] {$\Poisson(\rateB)$};

\node[circle, draw, inner sep=2pt] (A0) at (3,0) {$\typeA$};
\draw[->] (3,-0.5) -- ++(0,-1.5);
\filldraw (3,-1.2) node[anchor=east] {$\Poisson(\rateA)$};

\foreach \x in {0,...,2}
    \node[circle, draw, inner sep=2pt] (B\x) at (\x,-2.5) {$\typeB$};

\foreach \x/\label in {3/A, 4/A, 5/A}
    \node[circle, draw, inner sep=2pt] (A\x) at (\x,-2.5) {$\typeA$};

\draw[->] (5.5,-2.5) -- (7.5,-2.5);
\filldraw (6.5,-1.5) node[anchor=north] {$\Exp(\rateS)$};

\node[rectangle, draw, inner sep=2pt] (A0) at (8.5,-2.5) {Server};
\end{tikzpicture}
\end{center}

\noindent 
All parameters in the model are common knowledge and the queue is observable. 
When customers arrive, they decide whether to join the queue or balk. 
Moreover, when they are in the queue, they can  renege, before getting served.
However, only $\typeB$-customers renege because, unlike $\typeA$-customers, their position in the queue can get worse with the arrival of higher-priority customers. 

\subsection{Random Walks}
\label{suse:random-walk}

We now introduce some notation and some stochastic tools that will be used in the sequel to study the equilibrium and optimum of this model. 
\begin{definition} \label{de:P-and_E}
Consider a $\typeB$-customer (Bridget) who has $\nA$ $\typeA$-customers and $\nB$ $\typeB$-customers ahead of her.
We call $\Preward(\nA,\nB)$ the probability that Bridget is served, and   $\Etime(\nA,\nB)$ the expected total time that Bridget spends  in the system.
\end{definition}

Note that $\Preward(\nA,\nB)$ and $\Etime(\nA,\nB)$ depend on the the values $(\nA,\nB)$, the queueing system parameters, and the customers' strategies.

Given a $\typeB$-customer in the queue,  consider the birth-and-death process $\parens*{\ncustY_{\ttime}}_{\ttime\ge 0}$ that counts the number of people in the system who are ahead of this customer at any given time $\ttime \ge 0$.
This process increases by $1$ when an $\typeA$-customer joins the queue, and decreases by $1$ when a service gets completed. 

Let $\rtime_0=0$ and, for every $\round\geq 1$,  define
\begin{equation}
\label{eq:random-time}
\rtime_{\round}=\inf\braces*{\ttime>\rtime_{\round-1} \colon \text{ at time $\ttime$ either a $\typeA$-customer arrives or a service is completed}}.
\end{equation}
Then, for every $\round>0$, $\rtime_{\round}-\rtime_{\round-1}\sim \Exp(\rateA+\rateS)$, which implies 
\begin{equation}
\label{eq:expect-interarrival} 
\Expect(\rtime_{\round}-\rtime_{\round-1})=1/(\rateA+\rateS).
\end{equation}

We let $\ncust_{\round} \coloneqq \ncustY_{\rtime_{\round}}$ be the number of customers in the system at time $\rtime_{\round}$. 
The process $\parens*{\ncust_{\round}}_{\round\in\naturals}$ is a random walk with state space $\naturals$ such that $\ncust_{0} = \nAB$ and
\begin{equation}
\label{eq:random-walk} 
\ncust_{\round+1}
=
\begin{cases}
\ncust_{\round} + 1 & \text{with probability $\rateA/(\rateA+\rateS)$},\\
\ncust_{\round} - 1 & \text{with probability $\rateS/(\rateA+\rateS)$, if $\ncust_{\round}>0$},\\ 
\ncust_{\round} & \text{with probability $\rateS/(\rateA+\rateS)$,  if $\ncust_{\round} = 0$}.
\end{cases}
\end{equation} 
The following graph pictorially represents the above transition probabilities.





\begin{center}
\begin{tikzpicture}[->, >=stealth, auto, semithick, node distance=2cm]
\tikzstyle{every state}=[fill=white,draw=black,text=black]

\node[] (A) {$0$};
\node[] (B) [right of=A] {$1$};
\node[] (C) [right of=B] {$2$};
\node[] (D) [right of=C] {$\nAB-1$};
\node[] (E) [right of=D] {$\nAB$};
\node[] (F) [right of=E] {$\nAB+1$};
\node[] (G) [right of=F] {};
\node[] (H) [right of=G] {};

\path (A) edge [loop below] node {$\frac{\rateS}{\rateA+\rateS}$} (A)
      (A) edge [bend left, above] node {$\frac{\rateA}{\rateA+\rateS}$} (B)
      (B) edge [bend left, below] node {$\frac{\rateS}{\rateA+\rateS}$} (A)
      (B) edge [bend left, above] node {$\frac{\rateA}{\rateA+\rateS}$} (C)
      (C) edge [bend left, below] node {$\frac{\rateS}{\rateA+\rateS}$} (B)
      (C) edge [dashed, above] node {} (D)
      (D) edge [bend left, above] node {$\frac{\rateA}{\rateA+\rateS}$} (E)
      (E) edge [bend left, below] node {$\frac{\rateS}{\rateA+\rateS}$} (D)
      (E) edge [bend left, above] node {$\frac{\rateA}{\rateA+\rateS}$} (F)
      (F) edge [bend left, below] node {$\frac{\rateS}{\rateA+\rateS}$} (E)
      (F)  edge [dashed, above] node {} (G);

\end{tikzpicture}
\end{center}

%
%

\section{Semi-Strategic Model}
\label{se:semi-strategic}

To better understand the fully strategic model of the following sections, it is convenient to start our analysis with a ``semi-strategic'' regime where only $\typeB$-customers are strategic, whereas  $\typeA$-customers always join the queue and never renege. 
In this section, to guarantee stability of the system, we assume that $\rateS>\rateA$. 
At any given time, a $\typeB$-customer, knowing the state of the queue, decides to join the queue (or to remain in it), if and only if the expected future payoff of this customer is non-negative. 
The state of the queue is the pair $(\nA,\nB)$, representing the number of customers of type $\typeA$ and $\typeB$ who are already in the system, respectively. 
Let $(\nA,\nB)$ be the state of the system and let Bernard be the last $\typeB$-customer in the queue. 
Bernard can be served only when all the customers ahead of him have been served.
The number of these customers increases with each arrival of a new $\typeA$-customer.
Call $\benefitB(\nA,\nB)$ be the expected payoff for Bernard, when he is in position $\nAB=\nA+\nB$, and would renege when in position $\nAB+1$.  
Then $\benefitB(\nA,\nB)$ can be written as follows:
\begin{equation}
\label{eq:NetBenefit0}  \benefitB(\nA,\nB)=\rewardB \Preward(\nA,\nB-1)-\CostB \Etime(\nA,\nB-1).
\end{equation}
However, since $\typeA$-customers are not strategic and the service regime is \ac{FCFS}, the only relevant information for Bernard is the total number of customers $\nAB=\nA+\nB$ in the system. Hence $\benefitB(\nA,\nB)$ depends only on $\nAB$ and
\begin{equation}
\label{eq:NetBenefit}  \benefitB(\nA,\nB)\coloneqq\benefitnB(\nAB) = \rewardB \Preward_{\nAB}-\CostB \Etime_{\nAB},
\end{equation}
where $\Preward_{\nAB}$ and $\Etime_{\nAB}$ are  the probability of being served and the expected time spent in the system by Bernard when he is in position $\nAB$.

\begin{lemma}
\label{le:phi-n-B} 
For $\nAB\in\naturals$, 
\begin{equation}
\label{eq:Net-benefit}
\benefitnB(\nAB)=\rewardB\frac{1-\utilizationA}{1-\utilizationA^{\nAB+1}}-\frac{\CostB}{\rateS(1-\utilizationA)}\parens*{\frac{\nAB(1-\utilizationA)-\utilizationA(1-\utilizationA^{\nAB})}{1-\utilizationA^{\nAB+1}}}.
\end{equation}
\end{lemma}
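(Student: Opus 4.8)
The plan is to read off \eqref{eq:Net-benefit} from the classical gambler's ruin identities, after identifying $\Preward_{\nAB}$ and $\Etime_{\nAB}$ with, respectively, a ruin probability and an expected absorption time for the random walk $(\ncust_{\round})_{\round\in\naturals}$ of \cref{suse:random-walk}.

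First I would let $\kappa$ denote the first epoch at which $(\ncust_{\round})$, started from $\ncust_{0}=\nAB$, visits $\braces*{0,\nAB+1}$. Since $\typeA$-customers are non-strategic and the $\typeB$-customers queued behind Bernard are irrelevant to him, the only events affecting Bernard are $\typeA$-arrivals and service completions, so his position in the system at epoch $\round$ is exactly $\ncust_{\round}$: Bernard is served precisely at the epoch at which the walk first reaches $0$ — the corresponding down-step being his own service completion — and he reneges precisely at the epoch at which it first reaches $\nAB+1$, since one more $\typeA$-arrival would move him into position $\nAB+1$. Hence $\Preward_{\nAB}=\Prob_{\nAB}(\ncust_{\kappa}=0)$ is the probability that the walk hits $0$ before $\nAB+1$, and Bernard's total sojourn time equals $\rtime_{\kappa}$. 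Conditionally on the whole trajectory $(\ncust_{\round})_{\round}$ — and hence on $\kappa$ — the inter-epoch increments $\rtime_{\round}-\rtime_{\round-1}$ are still \iid $\Exp(\rateA+\rateS)$, the waiting time to the next event being exponential and independent of which event occurs, so by \eqref{eq:expect-interarrival} and Wald's identity $\Etime_{\nAB}=\Expect_{\nAB}[\rtime_{\kappa}]=\Expect_{\nAB}[\kappa]/(\rateA+\rateS)$; here $\kappa$ has finite expectation because the walk lives in the finite set $\braces*{0,\dots,\nAB+1}$.

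Then I would apply the standard gambler's ruin formulas on $\braces*{0,1,\dots,N}$ with $N=\nAB+1$, absorbing barriers at $0$ and $N$, up-probability $p\coloneqq\rateA/(\rateA+\rateS)$, down-probability $q\coloneqq\rateS/(\rateA+\rateS)$ and ratio $q/p=\rateS/\rateA=1/\utilizationA$; the section hypothesis $\rateS>\rateA$ ensures $\utilizationA<1$ and $p\ne q$, so the generic versions apply. From the starting point $\nAB$ the ruin probability is $\tfrac{(q/p)^{\nAB}-(q/p)^{N}}{1-(q/p)^{N}}$, which after multiplying numerator and denominator by $\utilizationA^{\nAB+1}$ collapses to $\Preward_{\nAB}=(1-\utilizationA)/(1-\utilizationA^{\nAB+1})$. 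For the expected number of steps one solves $D_{k}=pD_{k+1}+qD_{k-1}+1$ with $D_{0}=D_{N}=0$ (particular solution $k/(q-p)$, homogeneous part $A+B(q/p)^{k}$), obtaining $\Expect_{\nAB}[\kappa]=D_{\nAB}=\tfrac{1}{q-p}\bigl(\nAB-(\nAB+1)\tfrac{\utilizationA(1-\utilizationA^{\nAB})}{1-\utilizationA^{\nAB+1}}\bigr)$; dividing by $\rateA+\rateS$, using $(q-p)(\rateA+\rateS)=\rateS-\rateA=\rateS(1-\utilizationA)$, and putting the bracket over the common denominator $1-\utilizationA^{\nAB+1}$ yields $\Etime_{\nAB}=\tfrac{1}{\rateS(1-\utilizationA)}\cdot\tfrac{\nAB(1-\utilizationA)-\utilizationA(1-\utilizationA^{\nAB})}{1-\utilizationA^{\nAB+1}}$. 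Substituting these two expressions into $\benefitnB(\nAB)=\rewardB\Preward_{\nAB}-\CostB\Etime_{\nAB}$ of \eqref{eq:NetBenefit} gives \eqref{eq:Net-benefit}.

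The two substantive steps are those in the second paragraph: verifying that Bernard's service and reneging instants really coincide with the first visit of $(\ncust_{\round})$ to $0$ and to $\nAB+1$ respectively — in particular that the self-loop of the walk at state $0$ in \eqref{eq:random-walk} is immaterial, since Bernard has already departed by then — and justifying the passage from expected number of steps to expected elapsed time via Wald's identity. Everything after that, namely the two ruin computations and the algebraic rearrangement to the stated closed form, is routine.
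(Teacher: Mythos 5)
Your proof is correct and follows essentially the same route as the paper: identify $\Preward_{\nAB}$ and $\Etime_{\nAB}$ with the ruin probability and expected absorption time of the gambler's-ruin walk on $\{0,\dots,\nAB+1\}$ with up-probability $\rateA/(\rateA+\rateS)$, apply the standard formulas, and rearrange. The only difference is that you make explicit two steps the paper leaves implicit (the identification of Bernard's service/reneging instants with the hitting times, and the Wald-type conversion from expected number of steps to expected elapsed time via division by $\rateA+\rateS$), which is a welcome clarification rather than a departure.
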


\begin{proof}
Consider  \eqref{eq:NetBenefit}.
We now compute the probability $\Preward_{\nAB}$ that Bernard (in position $\nAB$) is served  before reneging (position $\nAB+1$). 
We can interpret $\Preward_{\nAB}$ as the ruin probability in the gambler ruin problem where the initial position is $\nAB$, the goal is $\nAB+1$ and the winning probability in each round equals $\rateA/(\rateS+\rateA)$. 
Then, by \cref{theo:GamblerRuin1},
\begin{equation}
\label{eq:P-n-1}
\Preward_{\nAB}=\frac{1-\utilizationA}{1-\utilizationA^{\nAB+1}}.  
\end{equation}

The total time $\Etime_{\nAB}$, spent in the system by Bernard when he is in position $\nAB$ corresponds to the expected number of rounds until the gambler is either ruined or reaches her goal. 
Then by  \cref{theo:GamblerRuin2}, 
\begin{align}
\Etime_{\nAB}&=\frac{1}{\rateS-\rateA}\parens*{\nAB-(\nAB+1)\utilizationA\frac{1-\utilizationA^{\nAB}}{1-\utilizationA^{\nAB+1}}}\nonumber\\
&=\frac{1}{\rateS-\rateA}\parens*{\frac{\nAB(1-\utilizationA^{\nAB+1})-(\nAB+1)\utilizationA(1-\utilizationA^{\nAB})}{1-\utilizationA^{\nAB+1}}}\nonumber\\
&=\frac{1}{\rateS(1-\utilizationA)}\parens*{\frac{\nAB(1-\utilizationA)-\utilizationA(1-\utilizationA^{\nAB})}{1-\utilizationA^{\nAB+1}}}. \nonumber
\end{align}
\end{proof}

%
%

\subsection{Equilibrium Strategy}\label{SubSec:SSEquilibrium}

In the following theorem, we characterize the equilibrium strategy for a $\typeB$-customer in the semi-strategic setup. 

\begin{theorem}
\label{theo:NetBenefitSemi-Strat}
Let $\nAB\in\naturals$ be the position in the queue of a $\typeB$-customer. 
The optimal strategy for this $\typeB$-customer 
is a threshold strategy, and the threshold $\Meq$ is the floor of the unique solution of the following equation:
\begin{equation}\label{eq:PriorityOptEquation}  \frac{\rewardB\rateS}{\CostB}=\frac{\nAB(1-\utilizationA)-\utilizationA(1-\utilizationA^{\nAB})}{(1-\utilizationA)^2}.
\end{equation}
\end{theorem}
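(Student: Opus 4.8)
The plan is to reduce the in-queue decision of a $\typeB$-customer to the sign of the net-benefit function $\benefitnB$ from Lemma~\ref{le:phi-n-B} and then read off the threshold. Under the semi-strategic regime a $\typeB$-customer at position $\nAB$ remains in the queue iff continuing has nonnegative expected value; since the $\typeA$-stream is non-strategic and service is \ac{FCFS}, a customer at position $\nAB$ who will leave once her position reaches $\nAB+1$ has value exactly $\benefitnB(\nAB)$. So the statement splits into: (i) the optimal policy is of threshold type, and (ii) its threshold $\Meq$ satisfies $g(\Meq)\le \rewardB\rateS/\CostB\le g(\Meq+1)$, where $g$ denotes the right-hand side of \eqref{eq:PriorityOptEquation}; the latter two inequalities are precisely the statements that a customer at the threshold wants to stay while one just above it wants to renege.

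For the algebra: because $\rateS>\rateA$ we have $\utilizationA\in(0,1)$ and $1-\utilizationA^{\nAB+1}>0$, so clearing this denominator in \eqref{eq:Net-benefit} and multiplying by $\rateS(1-\utilizationA)/\CostB>0$ gives
\[
\benefitnB(\nAB)\ge 0
\quad\Longleftrightarrow\quad
\frac{\rewardB\rateS}{\CostB}\ \ge\ \frac{\nAB(1-\utilizationA)-\utilizationA(1-\utilizationA^{\nAB})}{(1-\utilizationA)^2}\ =\ g(\nAB).
\]
Next I would record the key properties of $g$ on $[0,\infty)$: it is continuous, $g(0)=0$, $g(\nAB)\to\infty$, and it is strictly increasing. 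Strict monotonicity is the crux and follows cleanly from the telescoping identity
\[
g(\nAB+1)-g(\nAB)=\frac{1-\utilizationA^{\nAB+1}}{1-\utilizationA}=1+\utilizationA+\dots+\utilizationA^{\nAB}\ \ge\ 1
\]
on integer increments (and from $g'(\nAB)=[(1-\utilizationA)+\utilizationA^{\nAB+1}\ln\utilizationA]/(1-\utilizationA)^2>0$, using $1-t+t\ln t>0$ on $(0,1)$, for the continuous version). Since the standing assumption $\rewardB\rateS\ge\CostB$ gives $\rewardB\rateS/\CostB\ge 1=g(1)$ while $g(0)=0$ and $g\to\infty$, the intermediate value theorem yields a unique $x^*\ge 1$ with $g(x^*)=\rewardB\rateS/\CostB$, which is the unique solution of \eqref{eq:PriorityOptEquation}.

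Putting the pieces together, the two boundary conditions hold simultaneously iff $g(\Meq)\le g(x^*)\le g(\Meq+1)$, i.e.\ iff $\Meq\le x^*\le\Meq+1$ by strict monotonicity of $g$; under the convention that an indifferent customer stays, this pins down $\Meq=\floor*{x^*}$ and, in particular, makes the equilibrium threshold unique. The point that requires the most care—and the one I would write out in full—is the structural claim (i): one must verify that the equilibrium continuation value $v(\nAB)$ of a $\typeB$-customer at position $\nAB$ is nonincreasing in $\nAB$, so that $\{v>0\}$ is an initial segment and the optimal policy is genuinely a threshold, and that on this segment $v$ solves the linear recursion coming from the Bellman equation with boundary data $v(0)=\rewardB$ (service) and $v(\Meq+1)=0$ (reneging), whose solution—via the gambler's-ruin formulas behind Lemma~\ref{le:phi-n-B}—agrees with $\benefitnB(\Meq)$ at the top of the segment. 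This is a one-stage-deviation / monotone-dynamic-programming bookkeeping argument; the nongeneric case in which $x^*$ is an integer is absorbed by the same ``stay when indifferent'' convention.
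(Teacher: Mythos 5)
Your proposal is correct and follows essentially the same route as the paper: reduce the decision to the sign of $\benefitnB(\nAB)$ via Lemma~\ref{le:phi-n-B}, rewrite $\benefitnB(\nAB)\ge 0$ as $\rewardB\rateS/\CostB\ge g(\nAB)$, and use that $g$ is increasing and unbounded with $g(0)=0$ to identify $\Meq$ as the floor of the unique solution. You in fact supply more detail than the paper does (the explicit telescoping increment $g(\nAB+1)-g(\nAB)=1+\utilizationA+\dots+\utilizationA^{\nAB}$ and the remark on the threshold structure of the optimal policy), but the argument is the same.
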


\begin{proof}
The threshold is the largest  $\nAB$ for which $\benefitnB(\nAB)$ is nonnegative. 
By equation \eqref{eq:Net-benefit}, the threshold is the largest $\nAB$ such that
\begin{equation}
\label{eq:join-stay}  
\rewardB\frac{1-\utilizationA}{1-\utilizationA^{\nAB+1}}-\frac{\CostB}{\rateS(1-\utilizationA)}\parens*{\frac{\nAB(1-\utilizationA)-\utilizationA(1-\utilizationA^{\nAB})}{1-\utilizationA^{\nAB+1}}}\geq 0,
\end{equation}
which happens if and only if

\begin{equation}
\label{eq:B-joins}
\begin{split}    
\frac{\rewardB\rateS}{\CostB}
&\geq \frac{1-\utilizationA^{\nAB+1}}{(1-\utilizationA)^2}\cdot\frac{\nAB(1-\utilizationA)-\utilizationA(1-\utilizationA^{\nAB})}{1-\utilizationA^{\nAB+1}} \\
&=\frac{\nAB(1-\utilizationA)-\utilizationA(1-\utilizationA^{\nAB})}{(1-\utilizationA)^2}.
\end{split}
\end{equation}
If we let $\rhsBjoins(\nAB)$ denote the right hand side of \eqref{eq:B-joins}, then, the function $\rhsBjoins(\argdot)$ is unbounded and increasing, with $\rhsBjoins(0)=0$, so there exists a unique value $\Meq$ such that 
\begin{equation}
\label{eq:unique-value} \rhsBjoins(\Meq)\leq \frac{\rewardB\rateS}{\CostB}\leq \rhsBjoins(\Meq+1),   
\end{equation}
and $\Meq$ is the floor of the unique solution of $\rhsBjoins(\nAB)=\rewardB\rateS/\CostB$.
\end{proof}

The proofs of \cref{le:phi-n-B} and \cref{theo:NetBenefitSemi-Strat} resemble the proof of  \citet[lemma~2.4]{HasHav:Kluwer2003}, which deals with the social optimum threshold in  the Naor model. 
In fact, equations \eqref{eq:NaorOptEquation} and  \eqref{eq:PriorityOptEquation} differ only in the choice of rewards and costs. 
This is not surprising because, for a $\typeB$-customer, the semi-strategical priority regime is equivalent to a single-class \ac{LCFS} regime with arrival rate $\rateA$.

%
%
\section{Fully-Strategic Model}
\label{se:fully-strategic}
We consider now a fully-strategic queueing regime  where all customers act strategically, independently of their type. 
Since $\typeA$-customers have priority over $\typeB$-customers, their behavior is not influenced by the arrival of $\typeB$-customers. Therefore, the equilibrium behavior of $\typeA$-customers is the same as in \citet{Nao:E1969}, \ie  they use a threshold strategy, where the threshold is given by
\begin{equation}
\label{eq:Naor}
\naor=\floor*{\frac{\rewardA\rateS}{\CostA}}.
\end{equation}
The threshold $\naor$ sets an upper bound on the number of $\typeA$-customers in the system. 
In the fully-strategic model the state is the pair  $(\nA,\nB)$, representing the number of customers of type $\typeA$ and $\typeB$ who are already in the system, respectively.
The behavior of $\typeB$-customers depends on the threshold strategy of $\typeA$-customers.
In the following sections we characterize the equilibrium strategy of a $\typeB$-customer.

%
%

\subsection{Equilibrium}
\label{suse:equilibrium-preemption}
Consider  a fully-strategic priority queueing system where $(\nA, \nB)$ is the state of the system. 
We note that if the ratio $\rewardB/\CostB$ is not too large, then the $\typeB$-customers will leave the queue before the $\typeA$-customers reach their own equilibroium threshold.
To identify the value of $\rewardB/\CostB$ at which this occurs, we first analyze the situation in which a new $\typeB$-customer (Bernard) arrives and the state of the system is $(\naor,0)$, so there are no $\typeB$-customers in the queue, whereas the number of $\typeA$-customers is maximum. 
Given that the situation for Bernard can never become worse, if he decides to join the queue, he will remain in the system until his service is completed. 
Therefore, Bernard will join the queue if and only if 
\begin{equation}
\label{eq:FSNetBenefitIneq}
\rewardB-\CostB\Etime(\naor,0)\geq0,
\end{equation}
The inequality in \eqref{eq:FSNetBenefitIneq} implies that, if $\rewardB/\CostB<\Etime(\naor,0)$, then a $\typeB$-customer in the queue will abandon the system before it reaches the state $\nA=\naor$. 

\begin{theorem}
\label{theo:ThresholdFS1}    
Consider a fully-strategic priority queueing regime. 
Let $(\nA,\nB)$ be the state of the system. 
If $\rewardB/\CostB<\Etime(\naor,0)$, then the optimal strategy for a $\typeB$-customer is a threshold strategy on the total number of customers $\nAB=\nA+\nB$. 
Moreover, the threshold is $\floor{\nunsol}$ where $\nunsol$ is the unique solution of the following equation
\begin{equation} \label{eq:thresholdeqFS1}
\frac{\rewardB\rateS}{\CostB}=
\begin{cases} \displaystyle{\frac{(\nAB+1)(1-\utilizationA)-\utilizationA(1-\utilizationA^{\nAB+1})}{(1-\utilizationA)^2}} & \text{ if } \utilizationA\neq 1 \\ 
\\
\displaystyle{\frac{(\nAB+1)(\nAB+2)}{2}}& \text{ if } \utilizationA=1.
\end{cases}
\end{equation}    
\end{theorem}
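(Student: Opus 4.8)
The plan is to collapse the bi\-dimensional state $(\nA,\nB)$ to the uni\-dimensional random walk of \eqref{eq:random-walk}, just as in the semi\-strategic case, and then reuse \cref{theo:NetBenefitSemi-Strat} together with the gambler's ruin identities of \cref{theo:GamblerRuin1} and \cref{theo:GamblerRuin2}. First I would record the behaviour of $\typeA$\-customers: by \eqref{eq:Naor} the number of $\typeA$\-customers present never exceeds $\naor$, and all of them sit ahead of every $\typeB$\-customer. Then I would use the hypothesis $\rewardB/\CostB<\Etime(\naor,0)$, exactly as in the discussion preceding the statement: if some $\typeB$\-customer were still in the queue when the state $\nA=\naor$ were reached, the best continuation available to him would be to wait out the (capped) system from $\Etime(\naor,0)$, which by \eqref{eq:FSNetBenefitIneq} yields a strictly negative expected payoff, so he would have reneged strictly earlier. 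I would also note that a $\typeB$\-customer who is ahead of a given $\typeB$\-customer Bernard never reneges while Bernard is still present, since his position is strictly smaller than Bernard's and hence below the same threshold. Combining these observations, the number of customers ahead of Bernard, sampled at the jump times $\rtime_{\round}$, is precisely the walk $(\ncust_{\round})$ of \eqref{eq:random-walk}, and the reflecting barrier at $\naor$ is never visited.

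With this reduction, Bernard's decision is a one\-dimensional optimal stopping problem on the total count $\nAB=\nA+\nB$ of customers ahead of him: absorbing state $0$ (``served'', reward $\rewardB$), the option to renege (payoff $0$), costs accruing at rate $\CostB$, transitions as in \eqref{eq:random-walk}. This is structurally the same problem treated in \cref{theo:NetBenefitSemi-Strat}; in particular the net\-benefit function is monotone in $\nAB$, so the optimal and hence the equilibrium strategy of $\typeB$\-customers is a threshold on $\nAB$, and this threshold does not depend on Bernard's current position.

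It remains to identify the threshold. Here I would invoke \cref{theo:GamblerRuin1,theo:GamblerRuin2}: for a $\typeB$\-customer who joins with $\nAB$ customers ahead and reneges at the first instant this count would exceed the equilibrium threshold, the probability of being served and the expected sojourn time are the ruin probability and the expected duration of a gambler's ruin with the appropriate barriers, i.e.\ the closed forms already computed in the proof of \cref{le:phi-n-B}, with the barrier located one step above the join position. Writing the resulting net benefit, setting it equal to zero, and clearing the common factor of the form $1-\utilizationA^{\,\cdot\,}$ (the step that takes \eqref{eq:join-stay} to \eqref{eq:B-joins}) yields \eqref{eq:thresholdeqFS1}; the right\-hand side of \eqref{eq:thresholdeqFS1} is increasing and unbounded in $\nAB$ and vanishes at $\nAB=-1$, while $\rewardB\rateS/\CostB\ge1$, so there is a unique solution $\nunsol$, and the equilibrium threshold is $\floor{\nunsol}$. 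The case $\utilizationA=1$ follows either by the same computation with the symmetric gambler's ruin formulas or by letting $\utilizationA\to1$ in the general expression, giving $(\nAB+1)(\nAB+2)/2$.

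The main obstacle is making the first step fully rigorous: under $\rewardB/\CostB<\Etime(\naor,0)$ the cap $\naor$ must be shown to be irrelevant for \emph{every} $\typeB$\-customer simultaneously (not merely for one in isolation), so that the two\-dimensional state truly collapses to the single coordinate $\nAB$ and the clean gambler's ruin applies throughout. I would handle this by induction on the number of $\typeB$\-customers present: if the ``first'' $\typeB$\-customer (the one with no $\typeB$\-customer ahead of him) reneges before his position reaches $\naor$, then $\nA<\naor$ whenever any $\typeB$\-customer is in the system, which forces the same property for every later $\typeB$\-customer. A secondary point is bookkeeping: keeping careful track of which count the threshold is stated on and of the location of the reneging barrier relative to the join position, so as to obtain \eqref{eq:thresholdeqFS1} with $\nAB+1$ where the semi\-strategic equation \eqref{eq:PriorityOptEquation} carries $\nAB$.
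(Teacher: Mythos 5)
Your proposal is correct and follows essentially the same route as the paper, which itself only remarks that the proof mirrors that of \cref{theo:NetBenefitSemi-Strat} with $\nAB$ replaced by $\nAB+1$ (the newcomer entering position $\nAB+1$), after using $\rewardB/\CostB<\Etime(\naor,0)$ together with \cref{le:E-time} to conclude that the cap at $\naor$ is never reached and the problem reduces to the one-dimensional gambler's ruin of \eqref{eq:random-walk}. Your added care about the simultaneous irrelevance of the barrier for all $\typeB$-customers and the non-reneging of those ahead of Bernard is a sound elaboration of steps the paper leaves implicit.
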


The proof of \cref{theo:ThresholdFS1} is similar to that of \cref{theo:NetBenefitSemi-Strat}. 
In fact,  \cref{eq:thresholdeqFS1} coincides with  \cref{eq:PriorityOptEquation}, except that $\nAB$ is replaced by $\nAB+1$. 
This is due to the fact that, in \cref{theo:ThresholdFS1}, the new $\typeB$-customer would enter position $\nAB+1$, whereas in \cref{theo:NetBenefitSemi-Strat}, Bernard was already in the queue at position $\nAB$.

Moreover, the threshold given by \eqref{eq:thresholdeqFS1} is smaller than $\naor$.
This can be proved with the aid of \cref{le:E-time} and \cref{eq:expect-interarrival}, which shows that  the condition $\rewardB/\CostB<\Etime(\naor,0)$ can be written as follows:
\begin{equation}    \frac{\rewardB\rateS}{\CostB}<
\begin{cases}
\displaystyle{       \frac{(\naor+1)(1-\utilizationA)-\utilizationA\parens*{1-\utilizationA^{\naor+1}}}{(1-\utilizationA)^2}} & \text{ if } \utilizationA\neq 1\\ \\ 
\displaystyle{\frac{(\naor+1)(\naor+2)}{2}} & \text{ if } \utilizationA=1.
    \end{cases}
\end{equation}

Now we focus on the case  $\rewardB/\CostB\geq\Etime(\naor,0)$. 

\begin{theorem}
\label{theo:OptThresholdB-customers}
Consider a fully-strategic queueing system. 
Assume that   $(\nA,\nB)$ is the state of the system, and  $\rewardB/\CostB\geq \Etime(\naor,0)$. 
A $\typeB$-customer joins the queue if $\nAB=\nA+\nB<\Teq$, where
\begin{equation}
\label{eq:BThresholdStrat}
\Teq\coloneqq\naor+\Veq,
\end{equation}
and
\begin{equation}
\label{eq:Threshold-FS-Bcustomer}
\Veq=
\begin{cases}
\displaystyle{\floor*{\frac{1-\utilizationA}{1-\utilizationA^{\naor+1}}\bracks*{\frac{\rewardB\rateS}{\CostB}-\frac{1}{1-\utilizationA}\parens*{\naor-\frac{\utilizationA}{1-\utilizationA}\parens*{1-\utilizationA^{\naor}}}}}} & \text{ if } \utilizationA\neq 1, \\ \\ 
\displaystyle{\floor*{\frac{\rewardB\rateS}{\CostB(\naor+1)}-\frac{\naor}{2}}} 
& \text{ if } \utilizationA=1. \end{cases}
\end{equation}
\end{theorem}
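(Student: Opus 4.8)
The plan is to reduce the two-dimensional decision problem faced by a $\typeB$-customer to a one-dimensional ruin problem, to deduce from it that her equilibrium strategy is a threshold strategy on the \emph{total} queue length $\nAB=\nA+\nB$, and then to identify the threshold by examining the ``saturated'' states $\nA=\naor$, where the standing assumption $\rewardB/\CostB\ge\Etime(\naor,0)$ makes the relevant quantities explicit.

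First I would set up the reduction. Because $\typeA$-customers use the Naor threshold $\naor$ of \eqref{eq:Naor} and have priority, the number of priority customers ahead of a tagged $\typeB$-customer equals the total number of $\typeA$-customers in the system and performs a birth--death walk on $\{0,\dots,\naor\}$ — up at rate $\rateA$ while below $\naor$, down at rate $\rateS$ while positive — and a $\typeB$-customer advances exactly at those service epochs at which this priority queue is empty. Tracking only this walk turns the bi-dimensional random walk of \cref{de:P-and_E}, together with the reneging barrier of the candidate threshold strategy, into a one-dimensional ruin problem of the kind treated in \cref{suse:random-walk}; this is the reduction announced in \cref{suse:contribution}. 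From the resulting formulas for $\Preward(\nA,\nB)$ and $\Etime(\nA,\nB)$ one checks that the \emph{sign} of the expected joining payoff $\rewardB\,\Preward(\nA,\nB)-\CostB\,\Etime(\nA,\nB)$ of a $\typeB$-customer who sees $\nA$ priority and $\nB$ low-priority customers ahead depends on $(\nA,\nB)$ only through $\nAB$; hence, exactly as in \cref{theo:NetBenefitSemi-Strat}, the equilibrium strategy is a threshold strategy on $\nAB$ with some threshold $\Teq$, and she never reneges before her position reaches $\Teq+1$.

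Next I would pin down $\Teq$ via the saturated states. When $\nA=\naor$ no further priority customers are admitted, so the tagged customer's position cannot increase: she is served with probability one, and her joining payoff in a state with $\naor$ priority and $\nB$ low-priority customers ahead is $\rewardB-\CostB\,\Etime(\naor,\nB)$. By \cref{le:E-time} (whose proof follows the lines of \cref{le:phi-n-B}), $\Etime(\naor,\nB)$ splits as the expected time $\tfrac{1}{\rateS}\cdot\tfrac{\naor(1-\utilizationA)-\utilizationA(1-\utilizationA^{\naor})}{(1-\utilizationA)^2}$ needed to clear the $\naor$ priority customers, plus $(\nB+1)$ expected durations $\tfrac{1-\utilizationA^{\naor+1}}{\rateS(1-\utilizationA)}$ of a single ``$\typeB$-service'' cycle. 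Hence the joining payoff is nonnegative if and only if
\begin{equation*}
\nB+1\;\le\;\frac{1-\utilizationA}{1-\utilizationA^{\naor+1}}\left[\frac{\rewardB\rateS}{\CostB}-\frac{\naor(1-\utilizationA)-\utilizationA(1-\utilizationA^{\naor})}{(1-\utilizationA)^2}\right],
\end{equation*}
that is, since the right-hand side bounds the integer $\nB+1$, if and only if $\nB<\Veq$ with $\Veq$ as in \eqref{eq:Threshold-FS-Bcustomer}; the $\utilizationA=1$ expression is the limit as $\utilizationA\to1$. The hypothesis $\rewardB/\CostB\ge\Etime(\naor,0)$ guarantees $\Veq\ge1$, so we are in the non-degenerate regime, and a $\typeB$-customer in a saturated state joins iff $\nAB=\naor+\nB<\naor+\Veq$; this identifies $\Teq=\naor+\Veq$. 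That the threshold is not exceeded in non-saturated states is then checked by combining the formulas from the one-dimensional reduction with the monotonicity of the joining payoff in $\nAB$, and uniqueness follows as in \cref{theo:NetBenefitSemi-Strat}.

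The main difficulty is the first part: when $\nA<\naor$ both balking and reneging are active and a $\typeB$-customer is served with probability strictly less than one, so $\Preward$ and $\Etime$ must be evaluated along the equilibrium threshold policy, and one must show that the sign of the joining payoff still depends only on $\nAB$. Establishing this is precisely the content of the reduction of the bi-dimensional ruin problem to a one-dimensional one; once it is in hand, the saturated-state computation above delivers the explicit threshold $\Teq=\naor+\Veq$.
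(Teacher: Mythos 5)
Your proposal follows essentially the same route as the paper's proof: it identifies $\Veq$ from the saturated states $(\naor,\nB)$ via \cref{le:E-time}, reduces the non-saturated case to a one-dimensional ruin problem for $\Preward(\nA,\nB)$ and $\Etime(\nA,\nB)$ (so that the joining payoff depends on the state only through $\nAB$), and appeals to monotonicity of that payoff in $\nAB$ to confirm that $\Teq=\naor+\Veq$ is the threshold. The one step you leave unexecuted — verifying the monotonicity and that the joining condition holds with equality at $\nAB=\Teq-1$ — is exactly what the paper supplies via \cref{cl:G-increasing} and the computation in \eqref{eq:0=0}, so your sketch is on target.
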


\begin{proof}
If $\rewardB/\CostB\geq \Etime(\naor,0)$, the first $\typeB$-customer always joins the queue. 
We now determine the  maximum number of $\typeB$-customers that will always join the queue. 
To answer this question we consider the worst-case scenario, \ie we assume that the state of the system is $(\naor,\nB)$, and prove that, if $\rewardB/\CostB\geq \Etime(\naor,0)$, then a new $\typeB$-customer joins the queue whenever $\nB<\Veq$.
 
If the state is $(\naor,\nB)$, as time goes by, the situation for a new $\typeB$-customer who joins the queue can only improve; therefore this customer will join if
\begin{equation}
\label{eq:Bjoinqueue}
    \rewardB-\CostB\Etime(\naor,\nB)\geq 0.
\end{equation}
By  \cref{le:E-time} and \cref{eq:expect-interarrival}, if $\utilizationA\neq 1$,  then \eqref{eq:Bjoinqueue} is equivalent to

\begin{align}
&\rewardB\geq\CostB\frac{1}{\rateS(1-\utilizationA)} \bracks*{(\nB+1)\parens*{1 - \utilizationA^{\naor+1}}+  \naor-\frac{\utilizationA^{\naor+1}}{1-\utilizationA} \parens*{\utilizationA^{-\naor}-1}},
\intertext{that is,}
&\frac{\rewardB\rateS}{\CostB}\geq (\nB+1)\frac{1-\utilizationA^{\naor+1}}{1-\utilizationA}+\frac{1}{1-\utilizationA}\parens*{\naor-\frac{\utilizationA}{1-\utilizationA}\parens*{1-\utilizationA^{\naor}}},
\intertext{which becomes}
&\frac{1-\utilizationA}{1-\utilizationA^{\naor+1}}\bracks*{\frac{\rewardB\rateS}{\CostB}-\frac{1}{1-\utilizationA}\parens*{\naor-\frac{\utilizationA}{1-\utilizationA}\parens*{1-\utilizationA^{\naor}}}}\geq\nB+1, 
\intertext{and, finally}
&\frac{1-\utilizationA}{1-\utilizationA^{\naor+1}}\bracks*{\frac{\rewardB\rateS}{\CostB}-\frac{\naor(1-\utilizationA)-\utilizationA\parens*{1-\utilizationA^{\naor}}}{(1-\utilizationA)^2}}\geq \nB +1.
\end{align}

If $\utilizationA=1$, again, by  \cref{le:E-time} and \cref{eq:expect-interarrival},    \cref{eq:Bjoinqueue} is equivalent to
\begin{align}
&\rewardB\geq\CostB\frac{1}{2\rateS}\bracks*{\naor(\naor+1)+2(\nB+1)(\naor+1)},
\intertext{that is,}
&\frac{2\rewardB\rateS}{\CostB(\naor+1)}\geq2(\nB+1)+\naor,
\intertext{which is the same as}
&\frac{\rewardB\rateS}{\CostB(\naor+1)}-\frac{\naor}{2}\geq\nB+1.
\end{align}

Hence, if the state of the system is $(\naor,\nB)$, then a $\typeB$-customer joins the queue if $\nB<\Veq$, where $\Veq$ is given by \eqref{eq:Threshold-FS-Bcustomer}.

Combining the thresholds in \cref{eq:Threshold-FS-Bcustomer,eq:Naor}, we see that the number of customers in the system cannot exceed $ \naor+\Veq$. 
We now show that this value represents the optimal threshold strategy for a $\typeB$-customer in a fully-strategic queueing model when $\rewardB/\CostB\geq \Etime(\naor,0)$.

Let now $(\nA,\nB)$ be the state of the system. 
By \eqref{eq:Naor} and \eqref{eq:Threshold-FS-Bcustomer}, $\nAB=\nA+\nB$ cannot be larger than $\Teq$. 
Moreover, if $\nA\leq\naor$, we proved  that a $\typeB$-customer joins the queue whenever $\nB<\Veq$, and in this case, $\nAB$ is smaller than $\Teq$. 
We now consider the case $ \nB \geq \Veq $.
Since $ \nAB < \Teq $, the value $\nA$ must be smaller than $\naor$; therefore a new $\typeB$-customer (Brianna) joins the queue if 
\begin{equation}
\rewardB\Preward(\nA,\nB)-\CostB\Etime(\nA,\nB)\geq0,
\end{equation}
where $\Preward(\nA,\nB)$ is the probability that Brianna completes the service before reneging (position $ \Teq + 1 $), and $\Etime(\nA,\nB)$ is the expected total time spent in the system by Brianna, when the state of the system is $(\nA,\nB)$.

We can interpret $\Preward(\nA,\nB)$ as the ruin probability in the gambler ruin problem that we now describe.
We consider a random walk on $\naturals^{2}$ with the following transition probabilities:
\begin{equation}
\label{eq:random-walk-1} 
\begin{split}
&\text{for }\nA=0,\\
&(\nA,\nB) \to  (\nA+1,\nB)  \quad\text{w.p.\ } \frac{\rateA}{\rateA+\rateS},\quad
(\nA,\nB) \to  (\nA,\nB-1)  \quad\text{w.p.\ } \frac{\rateS}{\rateA+\rateS},\\ 
&\text{for }0< \nA<\naor,\\
&(\nA,\nB) \to  (\nA+1,\nB)  \quad\text{w.p.\ } \frac{\rateA}{\rateA+\rateS},\quad
(\nA,\nB) \to  (\nA-1,\nB)  \quad\text{w.p.\ } \frac{\rateS}{\rateA+\rateS},\\
&\text{for }\nA=\naor,\\
&(\nA,\nB) \to  (\nA,\nB)  \quad\text{w.p.\ } \frac{\rateA}{\rateA+\rateS},\quad
(\nA,\nB) \to  (\nA-1,\nB)  \quad\text{w.p.\ } \frac{\rateS}{\rateA+\rateS}.
\end{split}
\end{equation}
The transitions are represented in \cref{fig:ruin-1}.
Notice that, even if the random walk is on $\naturals^{2}$, in every state, a transition occurs with positive probability only to two adjacent states.
So this random walk is equivalent to a single dimensional random walk.

\bigskip
\begin{figure}[h!]

\begin{tikzpicture}[scale=1.2]

\draw[->] (0,0) -- (8,0) node[anchor=north] {$\nA$}; 
\draw[->] (0,0) -- (0,8) node[anchor=east] {$\nB$};  

\draw[red,thick] (4,0) -- (4,2); 
\draw[red, dashed] (4,2) -- (4,8); 
\node[below] at (4,0) {$\naor$}; 

\draw[blue,thick] (0,6) -- (4,2); 
\draw[blue,dashed] (4,2) -- (6,0); 

\draw[thick] (0,2) -- (4,2); 

\foreach \y in {0, 1, 2} { 
    \draw[dashed] (0,\y) -- (4,\y);
}
\foreach \y in {3, 4, 5, 6} { 
    \draw[dashed] (0,\y) -- (6-\y,\y);
}
\foreach \x in {0, 1, 2, 3, 4} { 
    \draw[dashed] (\x,0) -- (\x,6-\x);
}

\node[left] at (0,6) {$\Teq$};
\node[left] at (0,2) {$\Veq$};
\node[left] at (0,0) {$(0,0)$};

\fill (0,6) circle (2pt); 
\fill (4,2) circle (2pt); 
\fill (4,0) circle (2pt); 
\fill (0,2) circle (2pt); 
\fill (0,0) circle (2pt); 

\draw[->] (4,1) -- (3.1,1); 
\node[below] at (3.5,1) {$\rateS$};
\path (4,1) edge [loop right] node {$\rateA$} (4,1);

\draw[->] (2,1) -- (2.9,1); 
\node[below] at (2.5,1) {$\rateA$};
\draw[->] (2,1) -- (1.1,1); 
\node[below] at (1.5,1) {$\rateS$};

\draw[->] (0,1) -- (0.9,1); 
\node[below] at (0.5,1) {$\rateA$};
\draw[->] (0,1) -- (0,0.1); 
\node[left] at (0,0.5) {$\rateS$};

\draw[->] (0,4) -- (0.9,4); 
\node[below] at (0.5,4) {$\rateA$};
\draw[->] (0,4) -- (0,3.1); 
\node[left] at (0,3.5) {$\rateS$};

\draw[->] (2,3) -- (2.9,3); 
\node[below] at (2.5,3) {$\rateA$};
\draw[->] (2,3) -- (1.1,3); 
\node[below] at (1.5,3) {$\rateS$};

\end{tikzpicture}

\caption{Random-walk representation of a $\typeB$-customer strategy}
  \label{fig:ruin-1}
\end{figure}
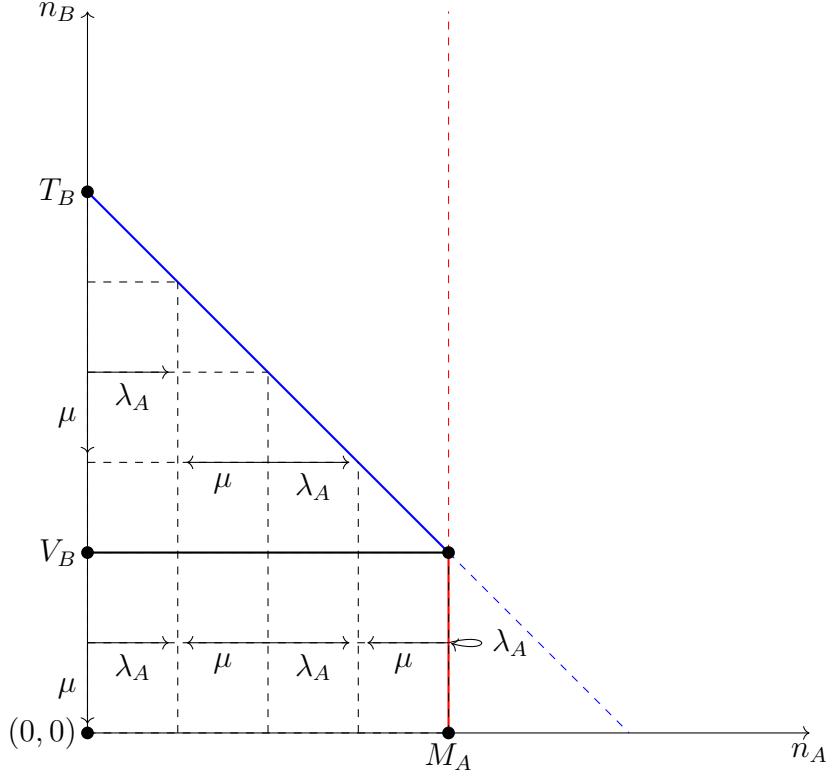

Once Brianna reaches the position $(0,\Veq)$, she gets served with probability $1$.
Hence, by \cref{theo:GamblerRuin1},
\begin{equation}
\label{eq:P-n-2}
\Preward(\nA,\nB)=
\begin{cases}
\displaystyle{\frac{1-\utilizationA^{\Teq-\nAB}}{1-\utilizationA^{\naor+1}}} & \text{ if } \utilizationA\neq 1 \\ \\ 
\displaystyle{\frac{\Teq-\nAB}{\naor+ 1}} &\text{ if } \utilizationA= 1.\end{cases}
\end{equation}
The expected total time $\Etime(\nA,\nB)$ spent in the system by Brianna can be interpreted as the expected time before  the first arrival either at $(0,0)$ or at $(\nA,\Teq-\nA+1)$. 
Then, it is possible to decompose $\Etime(\nA,\nB)$ in the following way

    \begin{equation}
        \Etime(\nA,\nB)=\Exittime(\nA,\nB)+ \Preward(\nA,\nB)\Etime(0,\Veq-1),
    \end{equation}
where $\Exittime(\nA,\nB)$ is the expected time of first arrival in either $(0,\Veq)$ or $(\nA,\Teq-\nA+1)$  starting from position $\nAB+1$; whereas $\Etime(0,\Veq-1)$,  is the expected waiting time spent by Brianna's in the system  when there are $(0,\Veq-1)$ customers ahead of her.

Then, by \cref{theo:GamblerRuin2},
\begin{equation}
\Exittime(\nA,\nB)=
\begin{cases}
\displaystyle{\frac{(\nAB+1-\Veq)\parens*{1-\utilizationA^{\naor+1}}-(\naor+1)\parens*{\utilizationA^{\Teq-\nAB}-\utilizationA^{\naor+1}}}{\rateS(1-\utilizationA)\parens*{1-\utilizationA^{\naor+1}}}} & \text{ if } \utilizationA\neq 1 \\ \\ \displaystyle{\frac{(\nAB+1-\Veq)(\Teq-\nAB)}{2\rateS}}
&\text{ if } \utilizationA= 1.
\end{cases}
\end{equation}
Moreover, by \cref{le:E-time} and \cref{eq:expect-interarrival},
    \begin{equation}
        \Etime(0,\Veq-1)=\begin{cases}\frac{\Veq(1-\utilizationA^{\naor+1})}{\rateS(1-\utilizationA)} & \text{ if } \utilizationA\neq 1 \\ \\
        \frac{\Veq}{\rateS}(\naor+1) &\text{ if } \utilizationA= 1.         
        \end{cases}
    \end{equation}
Hence, if $\utilizationA=1$, Brianna joins the queue if
\begin{equation}
\label{eq:Brianna-joins}   
\begin{split}
\frac{\rewardB}{\CostB}
&\geq  
\frac{(\nAB+1-\Veq)(\naor+1)}{2\rateS}+ \frac{\Veq}{\rateS}(\naor+1) \\
&= 
\frac{\naor+1}{2\rateS}\parens*{ \nAB+1 +\Veq},
\end{split}
\end{equation}
that is,
\begin{equation}
\label{eq:Brianna-joins-2} 
\frac{\rewardB}{\CostB}\cdot\frac{2\rateS}{\naor+1}-\Veq\geq \nAB+1.
\end{equation}
So, by \eqref{eq:Threshold-FS-Bcustomer}, Brianna joins the queue if $\nAB+1\leq \Teq$.
  
If $\utilizationA\neq 1$, Brianna joins the queue if
\begin{equation}
\label{eq:Brianna-joins-rho-not-1}  
\frac{\rewardB}{\CostB}\geq \frac{(\nAB+1-\Veq)\parens*{1-\utilizationA^{\naor+1}}-(\naor+1)\parens*{\utilizationA^{\Teq-\nAB}-\utilizationA^{\naor+1}}}{\rateS(1-\utilizationA)\parens*{1-\utilizationA^{\Teq-\nAB}}} +\frac{\Veq\parens*{1-\utilizationA^{\naor+1}}}{\rateS(1-\utilizationA)}.
\end{equation}
By equation \eqref{eq:Threshold-FS-Bcustomer} this is equivalent to
\begin{equation}
\label{eq:Brianna-joins-rho-not-1-a} 
\begin{split}
\frac{\rewardB}{\CostB}
&\geq \frac{(\nAB+1-\Veq)\parens*{1-\utilizationA^{\naor+1}}-(\naor+1)\parens*{\utilizationA^{\Teq-\nAB}-\utilizationA^{\naor+1}}}{\rateS(1-\utilizationA)\parens*{1-\utilizationA^{\Teq-\nAB}}} \\
&\quad+\frac{\rewardB}{\CostB}-\frac{\naor}{\rateS(1-\utilizationA)}+ \frac{\utilizationA(1-\utilizationA^{\naor})}{\rateS(1-\utilizationA)^2},
\end{split}
\end{equation}
That is
\begin{equation}
\label{eq:Brianna-joins-rho-not-1-b} 
\begin{split}
0
&\geq \frac{(\nAB+1-\Veq)\parens*{1-\utilizationA^{\naor+1}}-(\naor+1)\parens*{\utilizationA^{\Teq-\nAB}-\utilizationA^{\naor+1}}}{\rateS(1-\utilizationA)\parens*{1-\utilizationA^{\Teq-\nAB}}} \\
&\quad+ \frac{\utilizationA(1-\utilizationA^{\naor})-\naor(1-\utilizationA)}{\rateS(1-\utilizationA)^2}.
\end{split}
\end{equation}

Define 
\begin{equation}
\label{eq:G(n)}    G(\nAB)=\frac{(\nAB+1-\Veq)\parens*{1-\utilizationA^{\naor+1}}-(\naor+1)\parens*{\utilizationA^{\Teq-\nAB}-\utilizationA^{\naor+1}}}{\rateS(1-\utilizationA)\parens*{1-\utilizationA^{\Teq-\nAB}}}.
\end{equation}

\begin{claim}
\label{cl:G-increasing} 
The function $\nAB\mapsto G(\nAB)$ is increasing for $\Veq\leq \nAB \leq \Teq-1$.
\end{claim}
\begin{proof}
See \cref{se:appendix-proofs}.    
\end{proof}
Hence, if we take $\nAB=\Teq-1$ we have
\begin{equation}
\label{eq:0=0}   \begin{split}
0
&\geq \frac{\naor\parens*{1-\utilizationA^{\naor+1}}-(\naor+1)\utilizationA\parens*{1-\utilizationA^{\naor}}+\utilizationA(1-\utilizationA^{\naor})-\naor(1-\utilizationA)}{\rateS(1-\utilizationA)^2}\\
&= \frac{\naor\parens*{1-\utilizationA}-\utilizationA\parens*{1-\utilizationA^{\naor}}+\utilizationA(1-\utilizationA^{\naor})-\naor(1-\utilizationA)}{\rateS(1-\utilizationA)^2}\\
&=0.     
\end{split} 
\end{equation}
Then, the Brianna joins whenever $\nAB<\Teq$.
\end{proof}

\begin{remark}
\label{re:observanle-n}
In the whole paper, queues are assumed to be observable. 
It is important to clarify that if Abigail is an $\typeA$-customer and Bernard is a $\typeB$-customer, they can make their equilibrium decision by observing just the other customers that are (potentially) ahead of them, in the queue. 
That is, when Abigail arrives at the queue, all she needs to know is how many $\typeA$-customers are already in the queue. 
She does not case about $\typeB$-customers, because they will never be ahead of her. 
When Bernard arrives at the queue or, if he already is in the queue, he can make his balking or reneging equilibrium decisions based on the total number of customers ahead of him in the queue.
He does not need to know their types.

\end{remark}

%
%

\section{Global Optimization}
\label{se:global-optimization}
In the previous section we analyzed the  equilibrium behavior of selfish  customers both in semi-strategic and fully strategic frameworks. 
We now focus on socially optimum behavior. 
We assume the existence of a social planner who obtains a reward $\rewardA$ for any completed service of an $\typeA$-customer,  a reward $\rewardB$ for any completed service of a $\typeB$-customer, and incurs a cost $\CostA$ for any unit of time spent in the system by an $\typeA$-customer and a  cost $\CostB$ for any unit of time spent in the system by a $\typeB$-customer.
The social planner has full discretion and can decide whom to accept in the queue and whom to kick out, without any priority constraints. 
It is clear that the social planner has an incentive to favor customers whose reward-to-cost ratio $\rewardtype/\Costtype$ is higher. 
For instance, if $\rewardA/\CostA>\rewardB/\CostB$, then the social planner will admit at most 
$\optA=\floor{\nA^*}$ customers in the system, where $\nA^*$ is the unique solution of the equation 
\begin{equation}
\label{eq:opt-Naor-A}  
\frac{\rewardA\rateS}{\CostA}=
\begin{cases} 
\displaystyle{\frac{\nA^*(1-\utilizationA)-\utilizationA(1-\utilizationA^{\nA^*})}{(1-\utilizationA)^2}} 
& \text{ if } \utilizationA \neq 1,\\ 
\\
\displaystyle{\frac{\nA(\nA+1)}{2}} 
&\text{ if } \utilizationA = 1.
\end{cases}
\end{equation}
If $\nA<\optA$, the planner can accept a number of $\typeB$-customers.
Given $\nA$, the optimum total number of customers that can be admitted can be found by imposing a \ac{LCFS} regime  the customers'  equilibrium behavior. 
This is the technique used by \citet{Has:E1985} to determine the social optimum in the model with a single class.
The idea is that in a \ac{LCFS} regime the decision made by the last customer in the queue does not generate any externality on the other customers. 
Therefore what is optimal for this customer is also socially optimal.

If we define $\utilization \coloneqq (\rateA+\rateB)/\rateS$, then the optimum threshold on $\nA+\nB$ for the admission of a $\typeB$-customer is $\floor*{\nAB^{*}}$, where $\nAB^{*}$ is the unique solution of the equation
\begin{equation}
\label{eq:opt-threshold-B-tot}    
\frac{\rewardB\rateS}{\CostB}=
\begin{cases}
\displaystyle{\frac{\nAB^*(1-\utilization)-\utilization(1-\utilization^{\nAB^*})}{(1-\utilization)^2}} 
&\text{ if } \utilization\neq 1 \\
\\
\displaystyle{\frac{\nAB^*(\nAB^*+1)}{2}}
&\text{ if } \utilization=1. 
\end{cases}
\end{equation}
The above result implies that, if $\nA<\optA$, and a new $\typeA$-customer arrives, whenever the total number of customers (including the newcomer) exceeds $\floor*{\nAB^{*}}$, a $\typeB$-customer will be forced to renege.

%
%

\section{Class Optimization}
\label{se:class-optimization}
We consider now the case where priorities are binding even for social planners.
We represent this situation by assuming that there exist two social planners, one for each type. 
As in the previous section, for $\type$ in $\braces*{\typeA,\typeB}$, the $\type$-planner receives a reward $\rewardtype$ for each completed service of a $\type$-customer and incurs a cost $\Costtype$ for each unit of time a $\type$-customer spends in the system. 

First we analyze the optimal strategy for the $\typeA$-planner (Anastasia-Paris). 
Since $\typeA$-customers are not influenced by $\typeB$-customers, this is the classical optimal control problem of the Naor model. 
Thus, as seen in the previous section, Anastasia-Paris's threshold is the following: 

\begin{theorem}
\label{theo:AClassOpt}
The optimal strategy for an $\typeA$-planner is a threshold strategy and the threshold is 
\begin{equation}
\label{eq:AClassOpt}
\optA=\floor*{\nA^*},    \end{equation}
where $\nA^*$ is the unique solution of     \begin{equation} \frac{\rewardA\rateS}{\CostA}=
\begin{cases} 
\displaystyle{\frac{\nA^*(1-\utilizationA)-\utilizationA(1-\utilizationA^{\nA^*})}{(1-\utilizationA)^2}} 
& \text{ if } \utilizationA \neq 1,\\ 
\\
\displaystyle{\frac{\nA(\nA+1)}{2}} 
&\text{ if } \utilizationA = 1.
\end{cases}
\end{equation}
\end{theorem}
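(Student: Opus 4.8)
The plan is to show that the $\typeA$-planner's control problem \emph{decouples} entirely from everything concerning class $\typeB$, so that it coincides with Naor's single-class social-optimization problem, whose solution is recalled in \eqref{eq:NaorOptEquation} and \eqref{eq:opt-Naor-A}. For the decoupling I would argue as follows: because $\typeA$-customers have absolute, preemptive priority and all services are exponential with the common rate $\rateS$, the server is always occupied by an $\typeA$-customer whenever at least one $\typeA$-customer is present, and by memorylessness such a service is a fresh $\Exp(\rateS)$ variable. Hence, for \emph{any} admissible behaviour of the $\typeB$-planner, the process counting the number of $\typeA$-customers in the system is a birth-and-death chain with birth rate $\rateA$ (when the $\typeA$-planner admits) and death rate $\rateS$ (when nonempty) --- that is, an observable M/M/1 queue with parameters $(\rateA,\rateS)$ --- and the sojourn time of each admitted $\typeA$-customer has the same law as in that standalone queue. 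Since the $\typeA$-planner collects $\rewardA$ per completed $\typeA$-service and pays $\CostA$ per unit of time an $\typeA$-customer spends in the system, her objective is exactly the one optimized by \citet{Nao:E1969} and \citet{Has:E1985} for the homogeneous model.

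Given this identification, both threshold optimality and the value of the threshold follow from the single-class theory. I would establish threshold optimality via the \ac{LCFS} device already used in \cref{se:global-optimization}: the stationary law of the $\typeA$-occupancy is insensitive to the order of service, so one may imagine the $\typeA$-customers served in \ac{LCFS} order; there the admission/reneging decision of the currently last $\typeA$-customer imposes no externality on the others, so the socially optimal rule coincides with the individually optimal one, which --- by the same marginal computation as in the proofs of \cref{le:phi-n-B,theo:NetBenefitSemi-Strat} and \citet[lemma~2.4]{HasHav:Kluwer2003} --- is a threshold on the number of $\typeA$-customers present; transporting it back to the \ac{FCFS} regime yields that the $\typeA$-planner admits an arriving $\typeA$-customer if and only if she finds fewer than $\optA$ of them. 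The value $\optA$ comes from the marginal balance: among policies that admit up to $n$ $\typeA$-customers, the long-run reward-minus-cost rate has increments that change sign exactly once, the last nonnegative one occurring where $\rewardA\rateS/\CostA$ equals $\bigl(n(1-\utilizationA)-\utilizationA(1-\utilizationA^{n})\bigr)/(1-\utilizationA)^{2}$; this is \eqref{eq:opt-Naor-A}, and letting $\utilizationA\to 1$ (a Taylor expansion of the numerator and denominator of that fraction around $\utilizationA=1$) produces the expression $\nA^{*}(\nA^{*}+1)/2$ quoted for $\utilizationA=1$.

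It remains to check that $\nA^{*}$ is well defined. Writing $g$ for the right-hand side of the balance equation, one verifies --- separately but easily for $\utilizationA\neq 1$ and for $\utilizationA=1$ --- that $g(0)=0$, that $g$ is strictly increasing, and that $g(x)\to\infty$ as $x\to\infty$, so the equation $g(x)=\rewardA\rateS/\CostA$ has a unique root $\nA^{*}$, which is positive because $\rewardA\rateS\ge\CostA$ by the standing assumption; then $\optA=\floor{\nA^{*}}$ is unambiguous. I expect the only genuinely delicate point to be the decoupling claim of the first paragraph: once one is convinced that no choice available to the $\typeB$-planner can perturb the law of the $\typeA$-occupancy process, the statement is a direct translation of the classical single-class result and every remaining step is routine.
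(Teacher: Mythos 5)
Your proposal is correct and follows essentially the same route as the paper: the paper's (very brief) justification is precisely your decoupling observation --- preemptive priority makes the $\typeA$-occupancy process an M/M/1 queue unaffected by class $\typeB$, so the $\typeA$-planner faces Naor's classical single-class control problem, whose threshold is \eqref{eq:opt-Naor-A}. The extra details you supply (the \ac{LCFS} externality argument and the monotonicity/uniqueness check for $\nA^*$) merely make explicit what the paper delegates to \citet{Nao:E1969}, \citet{Has:E1985}, and its own \cref{se:global-optimization}.
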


We now consider the optimal strategy for the $\typeB$-planner (Belinda-Paris), who knows the optimal strategy of Anastasia-Paris. 
Along the lines of  \citet{Has:E1985}, it can be shown that Belinda-Paris's optimal strategy coincides with the  equilibrium strategy in a \ac{LCFS} regime. 
However, in our case, the $\typeB$-customers' equilibrium strategy depends on $\optA$.
If $\rewardB/\CostB < \rewardA/\CostA$, then Belinda-Paris's optimal strategy  coincides with the strategy used by the unique social planner. 
Hence,  we focus on Belinda-Paris's  optimal strategy when $\rewardB/\CostB \ge  \rewardA/\CostA$. 
In this situation,  Belinda-Paris may accept $\typeB$-customers even if there are $\optA$ $\typeA$-customers in the queue. 
Thus, as already done for the equilibrium strategy, we assume that the state is $(\nA,\nB)$, that the last $\typeB$-customer (Benjamin) is in position $\nAB = \nA + \nB > \optA$, and that he will renege when in position $\nAB + 1$.
Unlike the equilibrium case, however, now Benjamin's position can become worse even after the arrival of another $\typeB$-customer. 
Hence, to find the optimal strategy we have to compute the largest values $\nAB$ such that
\begin{equation}
\label{eq:Benjamin}  
\rewardB\Preward(\nA,\nB-1)-\CostB\Etime(\nA,\nB-1)\geq0,
\end{equation}

It is possible to compute $\Preward(\nA,\nB-1)$ and $\Etime(\nA,\nB-1)$ by describing the problem as a random walk.

\bigskip
\begin{figure}[h!]

\begin{tikzpicture}[scale=1.2]

\draw[->] (0,0) -- (8,0) node[anchor=north] {$\nA$}; 
\draw[->] (0,0) -- (0,8) node[anchor=east] {$\nB$};  

\draw[red,thick] (4,0) -- (4,2); 
\draw[red, dashed] (4,2) -- (4,8); 
\node[below] at (4,0) {$\optA$}; 

\draw[blue,thick] (0,6) -- (4,2); 
\draw[blue,dashed] (4,2) -- (6,0); 

\draw[thick] (0,2) -- (4,2); 

\foreach \y in {0, 1, 2} { 
    \draw[dashed] (0,\y) -- (4,\y);
}
\foreach \y in {3, 4, 5, 6} { 
    \draw[dashed] (0,\y) -- (6-\y,\y);
}
\foreach \x in {0, 1, 2, 3, 4} { 
    \draw[dashed] (\x,0) -- (\x,6-\x);
}

\node[left] at (0,6) {$\nAB$};
\node[left] at (0,2) {$\nAB-\optA$};
\node[left] at (0,0) {$(0,0)$};

\fill (0,6) circle (2pt); 
\fill (4,2) circle (2pt); 
\fill (4,0) circle (2pt); 
\fill (0,2) circle (2pt); 
\fill (0,0) circle (2pt); 

\draw[->,red] (4,1) -- (4,1.9); 
\node[right] at (4,1.5) {$\rateB$};
\draw[->] (4,1) -- (3.1,1); 
\node[below] at (3.5,1) {$\rateS$};
\path (4,1) edge [loop right] node {$\rateA$} (4,1);

\draw[->] (2,1) -- (2.9,1); 
\node[below] at (2.5,1) {$\rateA$};
\draw[->] (2,1) -- (1.1,1); 
\node[below] at (1.5,1) {$\rateS$};
\draw[->] (2,1) -- (2,1.9); 
\node[right] at (2,1.5) {$\rateB$};

\draw[->] (0,1) -- (0.9,1); 
\node[below] at (0.5,1) {$\rateA$};
\draw[->] (0,1) -- (0,0.1); 
\node[left] at (0,0.5) {$\rateS$};
\draw[->] (0,1) -- (0,1.9); 
\node[left] at (0,1.5) {$\rateB$};

\draw[->] (0,4) -- (0.9,4); 
\node[below] at (0.5,4) {$\rateA$};
\draw[->] (0,4) -- (0,3.1); 
\node[left] at (0,3.5) {$\rateS$};
\draw[->] (0,4) -- (0,4.9); 
\node[left] at (0,4.5) {$\rateB$};

\draw[->] (2,3) -- (2.9,3); 
\node[below] at (2.5,3) {$\rateA$};
\draw[->] (2,3) -- (1.1,3); 
\node[below] at (1.5,3) {$\rateS$};
\draw[->] (2,3) -- (2,3.9); 
\node[right] at (2,3.5) {$\rateB$};
\end{tikzpicture}
  \caption{Random-walk representation of a $\typeB$-planner strategy}
  \label{fig:ruin-2}
\end{figure}
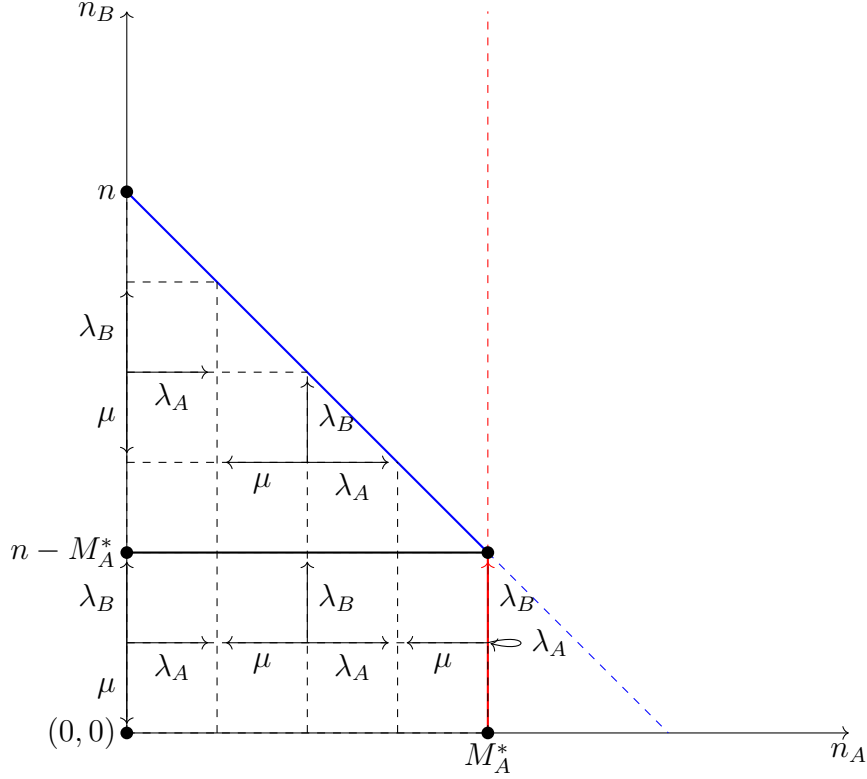

As shown in \cref{fig:ruin-2},   $\Preward(\nA,\nB-1)$ represents the probability of reaching the vertex $(0,0)$ before exiting the trapezoid, starting from a point on the blue segment $(\nA,\nAB-\nA)$.
Analogously to the equilibrium case, to reach the point $(0,0)$ the random walk must pass through the point $(0,\nAB-\optA)$.
However, reaching this point does not guarantee service because the regime  is \ac{LCFS}. Similarly, $\Etime(\nA,\nB-1)$ represents the expected time before the random walk reaches $(0,0)$ or exits the trapezoid, starting from a point on the blue segment. 
Again, calculating the expected time is more complicated than in the previous sections, because, even after reaching the rectangle in the figure, the random walk may return to the triangle before reaching the point $(0,0)$. 
Just as we observed  in the equilibrium case, it is possible to obtain $\Preward(\nA,\nB-1)$ and $\Etime(\nA,\nB-1)$ as the solution of a system of equations, but in this case finding the solution in closed form is quite complicated.

For every $(\statei,\statej)$ in the trapezoid in \cref{fig:ruin-2}, we call  $\hitzero(\statei,\statej)$ the probability of reaching $(0,0)$ before leaving the trapezoid. Then, we can write $\Preward(\nA,\nB-1)$ as follows:

\begin{equation}
\Preward(\nA,\nB-1)=
\begin{cases} \displaystyle{\frac{1-\utilization}{1-\utilization^{\optA+1}}\hitzero(0,\nAB-\optA)} &\text{ if } \utilization\neq 1 \\ 
\\
\displaystyle{\frac{1}{\optA+1}\hitzero(0,\nAB-\optA)} & \text{ if } \utilization=1,\end{cases} \end{equation}
where $(1-\utilization)/(1-\utilization^{\optA+1})$ and $1/(\optA+1)$ are the probabilities of reaching $(0,\nAB-\optA)$ before leaving the triangle, starting from the blue segment $(\nA,\nAB-\nA)$. 
Moreover, by \cref{Theo:hittingTime} $\hitzero(0,\nAB-\optA)$ is obtained by the following system of equations:
\begin{align*}
\label{eq:etan-MA}
\begin{alignedat}{2}
\hitzero(0,0)
&=1,\\
\hitzero(0,\nAB-\optA)
&=\frac{\rateS}{\rate+\rateS}\hitzero(0,\nAB-\optA-1)+\frac{\rateA}{\rate+\rateS}\hitzero(1,\nAB-\optA)\\
&\quad+\frac{\rateB}{\rate+\rateS}\hitzero(0,\nAB-\optA+1),\\
\hitzero(0,\statej)
&=\frac{\rateS}{\rate+\rateS}\hitzero(0,\statej-1) +\frac{\rateA}{\rate+\rateS}\hitzero(1,\statej)+\frac{\rateB}{\rate+\rateS}\hitzero(0,\statej+1), \\
&\quad \text{ if } \statej<\nAB-\optA,\\
\hitzero(\optA,\nAB-\optA)
&= \frac{\rateS}{\rate+\rateS}\hitzero(\optA-1,\nAB-\optA)+\frac{\rateA}{\rate+\rateS}\hitzero(\optA,\nAB-\optA),\\
\hitzero(i,\nAB-\optA)
&=\frac{\rateS}{\rate+\rateS}\hitzero(\statei-1,\nAB-\optA)+\frac{\rateA}{\rate+\rateS}\hitzero(\statei+1,\nAB-\optA)\\
&\quad +\frac{\rateB}{\rate+\rateS}\hitzero(\statei,\nAB-\optA+1),  \text{ if } 0<\statei<\optA,\\
\hitzero(\optA,\statej)
&= \frac{\rateS}{\rate+\rateS}\hitzero(\optA-1,\statej)+\frac{\rateA}{\rate+\rateS}\hitzero(\optA,\statej)+\frac{\rateB}{\rate+\rateS}\hitzero(\optA,\statej+1), \\
&\quad \text{ if } \statej<\nAB-\optA,\\
\hitzero(\statei,\statej)
&=\frac{\rateS}{\rate+\rateS}\hitzero(\statei-1,\statej) +\frac{\rateA}{\rate+\rateS}\hitzero(\statei+1,\statej)+\frac{\rateB}{\rate+\rateS}\hitzero(\statei,\statej+1), \\
&\quad \text{ if } \statej<\nAB-\optA, \ 0<\statei<\optA,
\end{alignedat}
\end{align*}
where, for every $0\le \statei< \optA$,
\begin{equation}
\label{eq:eta-in-M+1}   
\hitzero(\statei,\nAB-\optA+1)=
\begin{cases}
\displaystyle{\frac{1-\utilization^{\optA-\statei}}{1-\utilizationA^{\optA+1}}\hitzero(0,\nAB-\optA)} 
&\text{ if } \utilization\neq 1\\ \\
\displaystyle{\frac{\optA-i}{\optA+1}\hitzero(0,\nAB-\optA)} &\text{ if } \utilization=1.
\end{cases}
\end{equation}
Finally, if we call  $\hitzeroren(\statei,\statej)$ the expected waiting time before reaching $(0,0)$ or leaving the trapezoid, starting from $(i,j)$, we can compute $\Etime(\nA,\nB-1)$ as follows:
\begin{equation}
\label{eq:E-n-planner-B}
\Etime(\nA,\nB-1)=
\begin{cases} \displaystyle{\frac{\optA(1-\utilization)-\utilization(1-\utilization^{\optA})}{\rateS(1-\utilization)(1-\utilization^{\optA+1})} +\frac{1-\utilization}{1-\utilization^{\optA+1}}\hitzeroren(0,\nAB-\optA)} 
&\text{ if } \utilization\neq1,\\
\\
\displaystyle{\optA+\frac{1}{\optA+1}\hitzeroren(0,\nAB-\optA)}
&\text{ if } \utilization=1,\end{cases}
\end{equation}
where the first term is the expected time of the first passage to $(0,\nAB-\optA)$ or exiting the triangle, starting from $(0,\nAB-\optA)$. 
Moreover, by \cref{Theo:hittingTime} $\hitzeroren(0,\nAB-\optA)$ is obtained by the following system:

\begin{align*}
\begin{alignedat}{2}
\hitzeroren(0,0)
&=0\\
\hitzeroren(0,\nAB-\optA)
&=\frac{\rateS}{\rate+\rateS}\hitzeroren(0,\nAB-\optA-1) +\frac{\rateA}{\rate+\rateS}\hitzeroren(1,\nAB-\optA)+\frac{\rateB}{\rate+\rateS}\hitzeroren(0,\nAB-\optA+1) \\
\hitzeroren(0,\statej)
&=\frac{\rateS}{\rate+\rateS}\hitzeroren(0,\statej-1)+\frac{\rateA}{\rate+\rateS}\hitzeroren(1,\statej)+\frac{\rateB}{\rate+\rateS}\hitzeroren(0,\statej+1) \\
&\quad \text{ if } \statej<\nAB-\optA\\
\hitzeroren(\optA,\nAB-\optA)
&= \frac{\rateS}{\rate+\rateS}\hitzeroren(\optA-1,\nAB-\optA)+\frac{\rateA}{\rate+\rateS}\hitzeroren(\optA,\nAB-\optA) + \frac{\rateB}{\rate+\rateS}\\
\hitzeroren(\statei,\nAB-\optA)
&=\frac{\rateS}{\rate+\rateS}\hitzeroren(\statei-1,\nAB-\optA) +\frac{\rateA}{\rate+\rateS}\hitzeroren(\statei+1,\nAB-\optA)+\frac{\rateB}{\rate+\rateS}\hitzeroren(\statei,\nAB-\optA+1) \\
&\quad \text{ if } 0<\statei<\optA\\
\hitzeroren(\optA,\statej)
&= \frac{\rateS}{\rate+\rateS}\hitzeroren(\optA-1,\statej)+\frac{\rateA}{\rate+\rateS}\hitzeroren(\optA,\statej)+\frac{\rateB}{\rate+\rateS}\hitzeroren(\optA,\statej+1) \\
&\quad \text{ if } \statej<\nAB-\optA\\
\hitzeroren(\statei,\statej)
&=\frac{\rateS}{\rate+\rateS}\hitzeroren(\statei-1,\statej)+\frac{\rateA}{\rate+\rateS}\hitzeroren(\statei+1,\statej)+\frac{\rateB}{\rate+\rateS}\hitzeroren(\statei,\statej+1) \\
& \quad \text{ if } \statej<\nAB-\optA, \ 0<\statei<\optA,
\end{alignedat}
\end{align*}
where, for every $0\le \statei< \optA$, if $\utilization\neq1$,
\begin{equation*}
\hitzeroren(\statei,\nAB-\optA+1)=\displaystyle{\frac{(\statei+1)(1-\utilization^{\optA+1})-(\optA+1)(\utilization^{\optA-\statei}-\utilization^{\optA+1})}{\rateS(1-\utilization)(1-\utilization^{\optA+1})}+\frac{1-\utilization^{\optA-\statei}}{1-\utilization^{\optA+1}}\hitzeroren(0,\nAB-\optA)},    
\end{equation*}
and, if $\utilization=1$
\begin{equation*}
\hitzeroren(\statei,\nAB-\optA+1)=\displaystyle{(\optA-1)(\optA+1)+\frac{\optA-\statei}{\optA+1}\hitzeroren(0,\nAB-\optA)}.    
\end{equation*}

%
%

\section{Conclusions and Open Problems}
\label{se:conclusions}

We considered a strategic model of a \ac{FCFS} M/M/1 queue where customers are of one of two types  $\typeA,\typeB$ and $\typeA$-customers have priority over $\typeB$-customers. 
Customers cannot pay to change their type.
Customers of different types have different arrival rates, different rewards, and different waiting costs. 
All these parameters are assumed to be common knowledge, and the queue is observable. 
We characterized the unique equilibrium as a threshold strategy for each type of customers. 
We then studied the optimum strategy of a planner who is not bound by priorities. 
We finally moved to the case of two planners, one for each type, who have to respect the priority of $\typeA$ over $\typeB$.

A natural way to extend our results would be to consider more than two priority types. 
This will face some serious computational hurdles, some of which already appeared in \cref{se:class-optimization}.
The solution of the equilibrium problem with two types is obtained by resorting to the analysis of a bi-dimensional random walk, which can be reduced to a uni-dimensional random walk. 
The probability of ruin and expected time in the system for this random walk is solvable. 
Already, when computing the social optimum with two planners, the dimensionality reduction of the random walk is not feasible. 
Therefore, the solution of the problem involves the first exit time from a trapezoid of a truly bi-dimensional random walk.
When computing the equilibrium for the model with more than two priority types, even after dimensionality reduction, we need to consider the probability first exit from a complicated set of a multi-dimensional random walk. 
We can write down the structure of the problem, but the best we can achieve is a numerical approximation of its solution.

\bibliographystyle{apalike}
\bibliography{bib-multiclass-queues}

\appendix

%
%

\gdef\thesection{\Alph{section}} 
\makeatletter
\renewcommand\@seccntformat[1]{\appendixname\ \csname the#1\endcsname.\hspace{0.5em}}
\makeatother

\section{List of Symbols}
\label{se:symbols}

\begin{longtable}{p{.13\textwidth} p{.82\textwidth}}

$\typeA,\typeB$ & priority classes\\

$\Cost_{\type}$ & unitary waiting cost for customers of type $\type$\\

$\Etime(\nA,\nB)$ & expected total time spent by a $\typeB$-customer in the system when there are $(\nA,\nB)$ customers ahead of them, defined in \cref{de:P-and_E}\\
$\Etime_{\nAB}$ & expected time spent in the system by a $\typeB$-customer in position $\nAB$\\
$\naor$ & Naor's equilibrium threshold for $\typeA$-customers\\
$\optA$ & Naor's optimum threshold for $\typeA$-customers\\
$\Meq$ & equilibrium threshold in a semi-strategic system, introduced in \cref{theo:NetBenefitSemi-Strat}\\
$\nAB$ & $\nA+\nB$\\
$\nA$ & number of $\typeA$-customers\\
$\nB$ &  number of $\typeB$-customers\\

$\Preward(\nA,\nB)$ & probability of being served for a $\typeB$-customer when there are $(\nA,\nB)$ customers ahead of them, defined in \cref{de:P-and_E}\\
$\Preward_{\nAB}$ & probability that a $\typeB$-customer in position $\nAB$ gets served\\ 

$\reward_{\type}$ & reward for service completion for customers of type $\type$\\

$\Teq$ & $\naor+\Veq$, defined in \cref{eq:BThresholdStrat}\\
$\Veq$ & introduced in \cref{theo:OptThresholdB-customers}\\
$\ncust_{\round}$ & $\ncustY_{\rtime_{\round}}$\\
$\ncustY_{\ttime}$ & birth-and-death process\\

$\hitzero(\statei,\statej)$ & ruin probability in the Markov chain in \cref{fig:ruin-2} starting from $(\statei,\statej)$\\
$\type$ & generic type, $\type\in\braces{\typeA,\typeB}$\\
$\rateS$ & service rate\\
$\ratetype$ & arrival rate for customers of type $\type$\\

$\hitzeroren(\statei,\statej)$ & expected time of either reaching $(0,0)$ or leaving the trapezoid in \cref{fig:ruin-2} starting from $(\statei,\statej)$\\

$\utilizationtype$ & $\ratetype/\rateS$\\

$\rtime_{\round}$ & stopping time, defined in \cref{eq:random-time}\\
$\benefitB(\nA,\nB)$ & expected payoff, defined in \cref{eq:NetBenefit0}\\

$\benefitnB(\nAB)$
& $\benefitB(\nA,\nB)$, defined in \cref{eq:NetBenefit}\\

\end{longtable}


















\section{Gambler's Ruin}
\label{se:GamblerRuin}
This section is based on \citet[chapter~7]{Eth:Springer2010}.

Consider a bet that pays one unit of money. Let $p$ denote the probability of a win for the gambler, and $q=1-p$ the probability of a loss. Defining by $Y$ the random variable that represents the gambler profit after a single bet, we have
\begin{equation*}
Y=\begin{cases} 1 &\text{ with prob. } p \\ -1 &\text{ with prob. } q.
\end{cases}   
\end{equation*}
Let $(Y_k)_{k\geq 1}$ be a sequence of R.V. with the same distribution of $Y$, that is a sequence of independent bet. 
Then
\begin{equation*}
S_n=Y_1+\dots+Y_n    
\end{equation*}
represents the gambler's cumulative profit, with $S_0=0$. Consider now $W$ and $L$ in $\naturals$, we assume that the gambler stops
betting as soon as he wins $W$ units or loses $L$ units. Hence, the gambler stops to bet at the stopping time
\begin{equation*}
T(-L,W)=\inf\{n\geq0: \ S_n=-L \text{ or } S_n=W\}.    
\end{equation*}
Then the gambler’s ruin and the gambler's expected duration of play are the following.

\begin{theorem} [\protect{\citealt[theorem~ 7.1.1]{Eth:Springer2010}}]
    \label{theo:GamblerRuin1}
    \begin{equation}\label{eq:ProbGamblerRuin}
        \Prob\parens*{S_{T(_L,W)}=-L}=\begin{cases}1-\frac{\parens*{\frac{q}{p}}^L-1}{\parens*{\frac{q}{p}}^{L+W}-1} & \text{ if } p\neq q \\ \\ \frac{W}{L+W} & \text{ if } p= q 
            
        \end{cases}
    \end{equation}
\end{theorem}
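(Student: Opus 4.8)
The plan is to prove the formula by first-step analysis on the underlying random walk. For an integer $k$ with $-L \le k \le W$, let $h(k)$ denote the ruin probability when the gambler's cumulative profit starts at $k$ rather than $0$; the quantity of interest is then $h(0)$. Conditioning on the outcome $Y_1$ of the first bet and using the Markov property, one obtains the linear difference equation $h(k) = p\,h(k+1) + q\,h(k-1)$ for every interior state $-L < k < W$, together with the boundary conditions $h(-L) = 1$ and $h(W) = 0$.

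Next I would solve this recurrence explicitly. Its characteristic equation $p x^2 - x + q = 0$ has roots $x = 1$ and $x = q/p$, since $p + q = 1$. When $p \neq q$ the two roots are distinct, so every solution has the form $h(k) = A + B\,(q/p)^k$; imposing the two boundary conditions pins down $A$ and $B$, and evaluating at $k = 0$ and simplifying produces $\Prob(S_{T(-L,W)} = -L) = 1 - \frac{(q/p)^L - 1}{(q/p)^{L+W} - 1}$. When $p = q = 1/2$ the value $1$ is a double root, so the general solution is the affine family $h(k) = A + Bk$; the boundary conditions force $h(k) = (W - k)/(L + W)$, hence $h(0) = W/(L+W)$, as claimed.

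The one point that needs care is the well-definedness of $h$ and the uniqueness of the bounded solution to the recurrence — equivalently, that $T(-L,W) < \infty$ almost surely, so that $S_{T(-L,W)}$ is genuinely defined. This I would handle by the standard observation that from any interior state the walk reaches an absorbing endpoint within the next $L+W$ steps with probability at least $\min(p,q)^{L+W} > 0$; iterating gives a geometric tail bound on the absorption time, so it is finite almost surely, and then the recurrence on the finite state space $\{-L, \dots, W\}$ has a unique solution matching the boundary data. An alternative, essentially equivalent, route is to apply the optional stopping theorem to the martingale $(q/p)^{S_n}$ when $p \neq q$ (and to $S_n$ itself when $p = q$), with the same almost-sure finiteness of $T(-L,W)$ supplying the integrability needed to justify stopping; but the first-step-analysis argument is the most self-contained, so that is the version I would write out. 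The main obstacle is thus not the algebra but cleanly justifying this finiteness/uniqueness step.
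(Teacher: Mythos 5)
Your proof is correct: the first-step analysis gives the right boundary-value problem, the characteristic roots $1$ and $q/p$ are computed correctly, and evaluating the resulting two-parameter family at $k=0$ does reduce to the stated formula in both the $p\neq q$ and $p=q$ cases; your geometric tail bound also properly handles the finiteness of $T(-L,W)$ and the uniqueness of the bounded solution. Note, however, that the paper does not prove this statement at all — it is imported verbatim as Theorem~7.1.1 of \citet{Eth:Springer2010} — so there is no in-paper argument to compare against; what you have written is the standard textbook derivation (essentially the one in Ethier), and it is a valid self-contained substitute for the citation.
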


\begin{theorem}[\protect{\citealt[theorem~7.1.4]{Eth:Springer2010}}]
    \label{theo:GamblerRuin2}
    
    \begin{equation}
        \label{eq:DurationGamblerRuin}
        \Expect[T(-L,W)]=\begin{cases}\frac{L+W}{q-p}\parens*{\frac{L}{L+W}-\frac{\parens*{\frac{q}{p}}^L-1}{\parens*{\frac{q}{p}}^{L+W}-1}} & \text{ if } p\neq q\\ \\ L\cdot W  & \text{ if } p=q  \end{cases}.
    \end{equation}
\end{theorem}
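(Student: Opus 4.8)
The plan is to combine first-step analysis with \cref{theo:GamblerRuin1}. Put \(T\coloneqq T(-L,W)\), shift coordinates so that the walk starts at state \(j=L\) and is absorbed at \(0\) and at \(N\coloneqq L+W\), and let \(e_j\) denote \(\Expect[T]\) when the walk is started from state \(j\in\{0,1,\dots,N\}\); the target quantity is \(e_L\). Conditioning on the outcome of the first bet yields the discrete boundary-value problem
\begin{equation*}
e_0=e_N=0,\qquad e_j=1+p\,e_{j+1}+q\,e_{j-1}\quad(0<j<N).
\end{equation*}
Before using this I would first record that \(\Expect[T]<\infty\): in any block of \(N\) consecutive steps absorption occurs with probability at least \(\min\{p,q\}^N>0\), so \(T\) has geometrically decaying tails; this is precisely what makes the conditioning step legitimate.

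First I would dispatch the case \(p\neq q\) by solving the recurrence explicitly. Its homogeneous version \(p r^2-r+q=0\) has roots \(1\) and \(q/p\), while \(j\mapsto j/(q-p)\) is a particular solution, so \(e_j=A+B(q/p)^j+j/(q-p)\); the two boundary conditions determine \(A\) and \(B\) and give \(e_j=\frac{N}{q-p}\parens*{\frac{j}{N}-\frac{(q/p)^j-1}{(q/p)^N-1}}\), which, evaluated at \(j=L\), \(N=L+W\), is the first line of \eqref{eq:DurationGamblerRuin}. For \(p=q=\tfrac12\) the recurrence becomes \(e_{j+1}-2e_j+e_{j-1}=-2\), with homogeneous solutions \(1\) and \(j\) and particular solution \(-j^2\); imposing \(e_0=e_N=0\) forces \(e_j=j(N-j)\), hence \(e_L=LW\), the second line.

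Alternatively — and this is the route I would actually take, since \cref{theo:GamblerRuin1} is already available — one can bypass the recurrence using optional stopping. When \(p\neq q\) the process \(S_n-n(p-q)\) is a martingale, and when \(p=q\) the process \(S_n^2-n\) is a martingale; finiteness of \(\Expect[T]\) together with bounded increments makes the optional stopping theorem applicable, so \(\Expect[T]=\Expect[S_{T}]/(p-q)\) in the first case and \(\Expect[T]=\Expect[S_{T}^2]\) in the second. In each case \(\Expect[S_{T}]\) and \(\Expect[S_{T}^2]\) are read off from the two absorption probabilities furnished by \cref{theo:GamblerRuin1} (for \(p=q\), namely \(L/(L+W)\) and \(W/(L+W)\)), and a one-line simplification recovers \eqref{eq:DurationGamblerRuin}.

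For a classical statement like this there is no deep obstacle: the only point that genuinely needs care is the finiteness \(\Expect[T]<\infty\) underlying both the first-step conditioning and the optional-stopping argument, and after that everything reduces to routine manipulation of linear difference equations, respectively elementary arithmetic. Accordingly I would keep the write-up terse and refer to \citet[chapter~7]{Eth:Springer2010} for the remaining details.
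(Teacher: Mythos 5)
Your proposal is correct: the paper itself gives no proof of \cref{theo:GamblerRuin2}, citing it as a classical result from \citet[theorem~7.1.4]{Eth:Springer2010}, and both of your routes (first-step analysis solving the difference equation $e_j=1+p\,e_{j+1}+q\,e_{j-1}$ with $e_0=e_N=0$, or optional stopping applied to $S_n-n(p-q)$ and $S_n^2-n$ combined with the absorption probabilities of \cref{theo:GamblerRuin1}) are the standard textbook arguments and check out, including the one point that genuinely needs care, namely $\Expect[T]<\infty$ via the geometric tail bound. Nothing further is needed.
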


Finally, the following corollary represents the limit cases when $L$ and $W$ are equal to infinity.
\begin{corollary}[\protect{\citealt[corollary 7.1.6]{Eth:Springer2010}}]
    \label{cor:GamblersRuin}
    If p>q, then
    \begin{equation}
        \label{eq:DurationGamblerLinf}
        \Expect[T(-\infty,W)]=\frac{W}{p-q}.
    \end{equation}
     If p<q, then
    \begin{equation}
        \label{eq:DurationGamblerWinf}
        \Expect[T(-L,\infty)]=\frac{L}{q-p}.
    \end{equation}
\end{corollary}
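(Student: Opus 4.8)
The plan is to derive both identities from the closed form of \cref{theo:GamblerRuin2} by sending one barrier to infinity. The two statements are mirror images of one another: replacing each increment $Y_k$ by $-Y_k$ turns $(S_n)$ into $(-S_n)$, which swaps $p$ with $q$ and interchanges the roles of the two barriers, so I would prove the first identity (for $p>q$) and deduce the second from it.

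First I would make sense of $T(-\infty,W)$ as a genuine limit. Since the walk is nearest-neighbour, $S_0=0$, and $W\in\naturals$, one has pathwise $T(-L,W)=\min\braces*{\tau_W,\tau_{-L}}$, where $\tau_W,\tau_{-L}$ denote the first hitting times of the levels $W$ and $-L$; as $\tau_{-L}$ is non-decreasing in $L$, so is $T(-L,W)$, and $T(-L,W)\uparrow T(-\infty,W)$ with $T(-\infty,W)=\tau_W$. When $p>q$ the strong law of large numbers gives $S_n/n\to p-q>0$ almost surely, hence $S_n\to+\infty$ and $\tau_W<\infty$ almost surely. Each $\Expect[T(-L,W)]$ is finite by \cref{theo:GamblerRuin2}, so by the monotone convergence theorem $\Expect[T(-\infty,W)]=\lim_{L\to\infty}\Expect[T(-L,W)]$, which reduces the problem to the computation of this limit.

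Writing $r\coloneqq q/p\in(0,1)$, \cref{theo:GamblerRuin2} gives
\[
\Expect[T(-L,W)]=\frac{L+W}{q-p}\parens*{\frac{L}{L+W}-\frac{r^{L}-1}{r^{L+W}-1}},
\]
which is an $\infty\cdot 0$ indeterminate form, so the point of the calculation is to expand the bracket to the correct order. From $\frac{r^{L}-1}{r^{L+W}-1}=\frac{1-r^{L}}{1-r^{L+W}}=1-r^{L}\parens*{1-r^{W}}+O(r^{2L})$ and $\frac{L}{L+W}-1=-\frac{W}{L+W}$, the bracket equals $-\frac{W}{L+W}+r^{L}\parens*{1-r^{W}}+O(r^{2L})$, whence $\Expect[T(-L,W)]=\frac{-W}{q-p}+\frac{(L+W)\,r^{L}(1-r^{W})}{q-p}+O\bigl((L+W)\,r^{2L}\bigr)$. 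Since $r<1$ the last two terms tend to $0$ and the limit is $\frac{-W}{q-p}=\frac{W}{p-q}$. For $p<q$ I would apply this to $(-S_n)$, whose win probability is $q>p$: its first hitting time of $W$ is the first hitting time of $-W$ for $(S_n)$, which equals $\lim_{M\to\infty}T(-W,M)=T(-W,\infty)$ because, with $p<q$, the walk drifts to $-\infty$ and hits $-W$ almost surely; this yields $\Expect[T(-W,\infty)]=W/(q-p)$, i.e.\ the second identity after renaming. (Equivalently, let $W\to\infty$ directly in \cref{theo:GamblerRuin2}: now $r=q/p>1$, so $r^{L+W}$ dominates and the correction term vanishes at once.)

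The step I expect to be the main obstacle is the $\infty\cdot 0$ limit: the naive substitution $r^L\to 0$, $\frac{L}{L+W}\to 1$ is inconclusive, so one must retain the term of order $r^L$ in the expansion of $\frac{r^{L}-1}{r^{L+W}-1}$ and then verify that multiplying it by the diverging prefactor $L+W$ still gives something tending to $0$. A cleaner route that avoids this expansion is Wald's identity for $\tau_W$: once $\Expect[\tau_W]<\infty$ is known — which needs, beyond the almost-sure finiteness above, an integrability bound such as a geometric tail estimate for $\tau_W$ — one gets $W=\Expect[S_{\tau_W}]=\Expect[\tau_W]\,\Expect[Y_1]=(p-q)\,\Expect[\tau_W]$, hence $\Expect[\tau_W]=W/(p-q)$ immediately.
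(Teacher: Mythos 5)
Your proof is correct. Note, however, that the paper does not prove this statement at all: it is quoted verbatim as Corollary 7.1.6 of the cited textbook \citeorp{Eth:Springer2010}, so there is no in-paper argument to compare against. Your derivation is essentially the standard one for this corollary: you justify $T(-L,W)\uparrow\tau_W$ and the interchange of limit and expectation via monotone convergence (using the SLLN to get $\tau_W<\infty$ a.s.\ when $p>q$), then resolve the $\infty\cdot 0$ form in the two-barrier formula of \cref{theo:GamblerRuin2} by keeping the $r^L$ term, and obtain the $p<q$ case by the reflection $Y_k\mapsto -Y_k$. The expansion
\begin{equation*}
\frac{L}{L+W}-\frac{1-r^{L}}{1-r^{L+W}}=-\frac{W}{L+W}+r^{L}\parens*{1-r^{W}}+\bigoh\parens*{r^{2L}},\qquad r=q/p\in(0,1),
\end{equation*}
is right, and $(L+W)r^{L}\to 0$ does kill the correction terms. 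The one point worth tightening is the justification that the limit of $T(-L,W)$ is genuinely $\tau_W$ (rather than just some a.s.\ limit): pathwise, once $L$ exceeds $-\min_{n\le \tau_W}S_n$ the stopping times agree, which is where a.s.\ finiteness of $\tau_W$ is actually used. Your alternative via Wald's identity is also valid and cleaner, provided you supply the integrability of $\tau_W$ (e.g.\ from the uniform bound $\Expect[T(-L,W)]\le W/(p-q)$ you just derived, or a geometric tail estimate), since Wald requires $\Expect[\tau_W]<\infty$ and not merely $\tau_W<\infty$ a.s.
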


\section{Additional Proofs} \label{se:appendix-proofs}

In this section, we will present all the tools necessary to prove \cref{theo:OptThresholdB-customers}. We begin recalling a result from Markov chain regarding hitting times.

\begin{theorem}[\protect{\citealt[theorems~1.3.2 and 1.3.5]{Norris_1997}}]
\label{Theo:hittingTime}
Let $(X_n)_{n\geq 0}$ be a Markov chain on the state space $I$ and transition matrix $P$. 
For every $J\subset I$ define by $H_J:=\inf\{n\geq0: \ X_n\in J\}$ the hitting time in $J$. 
Moreover, for every $i\in I$ define $\hitzero_{\statei}=\Prob_{\statei}(H_J<\infty)$ and $\hitzeroren_{\statei}=\Expect_i[H_J]$. 
Then, $(\hitzero_{\statei})_{\statei\in I}$ is the minimal nonnegative solution of the following linear system
    \begin{equation}\label{eq:hitting time1}
        \begin{cases}
            \hitzero_{\statei}=1 & \text{ if } i\in J\\ \hitzero_{\statei}=\sum_{j\in I}P_{ij}\hitzero_j & \text{ if } i\in I\backslash J,
        \end{cases}
    \end{equation}
    and $(\hitzeroren_{\statei})_{i\in I}$ is the minimal nonnegative solution of the following linear system
    \begin{equation}\label{eq:hitting time2}
        \begin{cases}
            \hitzeroren_{\statei}=0 & \text{ if } i\in J\\ \hitzeroren_{\statei}=1+\sum_{j\in I}P_{ij}\hitzeroren_j & \text{ if } i\in I\backslash J
        \end{cases}
    \end{equation}
\end{theorem}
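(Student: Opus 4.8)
The plan is to prove both characterizations by first-step analysis and then to obtain the minimality statements by a monotone approximation through the truncated hitting times $\min(H_J,\run)$.

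First I would treat the hitting probabilities. If $\statei\in J$ then $H_J=0$ under $\Prob_\statei$, so $\hitzero_\statei=1$ and the first line of \eqref{eq:hitting time1} holds trivially. If $\statei\in I\backslash J$ then $H_J\ge 1$, so conditioning on $X_1$ and applying the Markov property at time $1$ gives $\hitzero_\statei=\sum_{j\in I}P_{\statei j}\hitzero_j$; since $\hitzero_\statei\in[0,1]$ for every $\statei$, the vector $(\hitzero_\statei)_{\statei\in I}$ is a nonnegative solution of \eqref{eq:hitting time1}. For minimality I would take an arbitrary nonnegative solution $(y_\statei)_{\statei\in I}$: it agrees with $\hitzero$ on $J$, and on $I\backslash J$ splitting the first step according to whether it lands in $J$ gives $y_\statei=\Prob_\statei(X_1\in J)+\sum_{j\in I\backslash J}P_{\statei j}y_j$. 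Substituting this relation into itself $\run$ times yields
\[
y_\statei=\Prob_\statei(H_J\le\run)+\!\!\sum_{j_1,\dots,j_\run\in I\backslash J}\!\!P_{\statei j_1}P_{j_1j_2}\cdots P_{j_{\run-1}j_\run}\,y_{j_\run}\ \ge\ \Prob_\statei(H_J\le\run),
\]
where nonnegativity of $(y_\statei)$ kills the residual term; letting $\run\to\infty$ and using continuity from below of $\Prob_\statei$ gives $y_\statei\ge\Prob_\statei(H_J<\infty)=\hitzero_\statei$.

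For the expected hitting times I would argue in exactly the same way, now allowing the value $+\infty$. On $J$ one has $\hitzeroren_\statei=0$; on $I\backslash J$, conditioning on $X_1$ gives the relation $\hitzeroren_\statei=1+\sum_{j\in I}P_{\statei j}\hitzeroren_j$, valid in $[0,\infty]$. Given any nonnegative solution $(z_\statei)_{\statei\in I}$, it agrees with $\hitzeroren$ on $J$, and iterating $\run$ times on $I\backslash J$ (using $z_j=0$ for $j\in J$) produces
\[
z_\statei=\sum_{\round=1}^{\run}\Prob_\statei(H_J\ge\round)+\!\!\sum_{j_1,\dots,j_\run\in I\backslash J}\!\!P_{\statei j_1}\cdots P_{j_{\run-1}j_\run}\,z_{j_\run}\ \ge\ \sum_{\round=1}^{\run}\Prob_\statei(H_J\ge\round)=\Expect_\statei[\min(H_J,\run)],
\]
and letting $\run\to\infty$ the monotone convergence theorem gives $z_\statei\ge\Expect_\statei[H_J]=\hitzeroren_\statei$, which is the minimality of $(\hitzeroren_\statei)$ for \eqref{eq:hitting time2}.

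The only delicate point, and the part I would write out with care, is the induction behind the two displayed iteration identities: one has to check that after $\run$ substitutions the already-absorbed mass is exactly $\Prob_\statei(H_J\le\run)$ (respectively that the accumulated constants telescope to $\sum_{\round=1}^{\run}\Prob_\statei(H_J\ge\round)=\Expect_\statei[\min(H_J,\run)]$), and that the leftover sum over length-$\run$ paths avoiding $J$ is a genuinely nonnegative combination of the candidate values — this is where nonnegativity of the competing solution is used, and it is exactly what forces it to dominate the probabilistic one. Everything after that is a routine passage to the limit.
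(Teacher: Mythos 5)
Your argument is correct and is precisely the standard first-step-analysis proof with the iteration/minimality argument given in Norris's \emph{Markov Chains} (Theorems 1.3.2 and 1.3.5), which is exactly the source the paper cites for this result without reproducing the proof. Nothing to add: the identification $\sum_{\round=1}^{\run}\Prob_{\statei}(H_J\ge\round)=\Expect_{\statei}[\min(H_J,\run)]$ and the passage to the limit by monotone convergence are the right delicate points, and you handle them as in the reference.
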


Let $(\NA,\NB)$ be a random walk on $\naturals^2$ with $\NA\le L$ and the following transition probabilities:
\begin{equation}\nonumber
\begin{split}
&\text{for }\NA=0,\\
&(\NA,\NB) \to  (\NA+1,\NB)  \quad\text{w.p.\ } \frac{\rateA}{\rateA+\rateS},\quad
(\NA,\NB) \to  (\NA,\NB-1)  \quad\text{w.p.\ } \frac{\rateS}{\rateA+\rateS},\\ 
&\text{for }0< \NA<L,\\
&(\NA,\NB) \to  (\NA+1,\NB)  \quad\text{w.p.\ } \frac{\rateA}{\rateA+\rateS},\quad
(\NA,\NB) \to  (\NA-1,\NB)  \quad\text{w.p.\ } \frac{\rateS}{\rateA+\rateS},\\
&\text{for }\NA=L,\\
&(\NA,\NB) \to  (\NA,\NB)  \quad\text{w.p.\ } \frac{\rateA}{\rateA+\rateS},\quad
(\NA,\NB) \to  (\NA-1,\NB)  \quad\text{w.p.\ } \frac{\rateS}{\rateA+\rateS}.
\end{split}
\end{equation}
The transitions are represented in \cref{fig:ruin-3}.
Notice that, given a starting point $(\nA,\nB)$ the random walk is confined to a rectangle with a base of $L$ and a height of $\nB$. Moreover, even if the random walk is on $\naturals^{2}$, in every state, a transition occurs with positive probability only to two adjacent states. In particular, it moves horizontally as long as $\NA > 0$, moves downward vertically only when $\NA = 0$, and never moves upward.

\bigskip
\begin{figure}[h!]

\begin{tikzpicture}[scale=1.2]

\draw[->] (0,0) -- (6,0) node[anchor=north] {$\NA$}; 
\draw[->] (0,0) -- (0,5) node[anchor=east] {$\NB$};  

\draw[thick] (4,0) -- (4,5); 
\node[below] at (4,0) {$L$}; 

\draw[thick] (0,3) -- (4,3); 

\foreach \y in {0, 1, 2, 3, 4, 5} { 
    \draw[dashed] (0,\y) -- (4,\y);
}

\foreach \x in {0, 1, 2, 3, 4} { 
    \draw[dashed] (\x,0) -- (\x,5);
}

\node[left] at (0,3) {$\nB$};
\node[left] at (0,0) {$(0,0)$};

\fill (0,0) circle (2pt); 

\draw[->] (4,1) -- (3.1,1); 
\node[below] at (3.5,1) {$\rateS$};
\path (4,1) edge [loop right] node {$\rateA$} (4,1);

\draw[->] (2,1) -- (2.9,1); 
\node[below] at (2.5,1) {$\rateA$};
\draw[->] (2,1) -- (1.1,1); 
\node[below] at (1.5,1) {$\rateS$};

\draw[->] (0,1) -- (0.9,1); 
\node[below] at (0.5,1) {$\rateA$};
\draw[->] (0,1) -- (0,0.1); 
\node[left] at (0,0.5) {$\rateS$};

\draw[->] (0,3) -- (0.9,3); 
\node[below] at (0.5,3) {$\rateA$};
\draw[->] (0,3) -- (0,2.1); 
\node[left] at (0,2.5) {$\rateS$};

\draw[->] (2,3) -- (2.9,3); 
\node[below] at (2.5,3) {$\rateA$};
\draw[->] (2,3) -- (1.1,3); 
\node[below] at (1.5,3) {$\rateS$};

\end{tikzpicture}

\caption{Random-walk in a rectangle}
  \label{fig:ruin-3}
\end{figure}
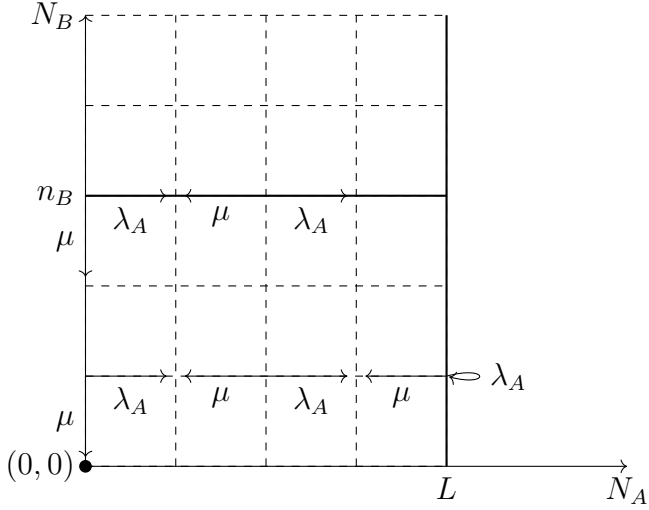

\begin{color}{blue}

\begin{lemma}
\label{le:E-time}
Consider the random walk in \cref{fig:ruin-3} with initial state 
$(\nA,\nB)$. 
Then, letting $\utilizationA \coloneqq \rateA/\rateS$, the expected time of reaching $(0,0)$ is    \begin{equation}
\label{eq:Etime}       \Htime(\nA,\nB)=
\begin{cases} \displaystyle{\frac{\rateA+\rateS}{\rateS(1-\utilizationA)} \bracks*{\nB\parens*{1 - \utilizationA^{L+1}}+  \nA-\frac{\utilizationA^{L+1}}{1-\utilizationA} \parens*{\utilizationA^{-\nA}-1}}} 
& \text{ if } \utilizationA\neq1, \\
\\
\displaystyle{\nA(2L+1-\nA)+2\nB(1+L)} & \text{ if } \utilizationA=1.
\end{cases}
\end{equation}
\end{lemma}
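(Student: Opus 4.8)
The plan is to exploit the one-dimensional reduction already apparent in \cref{fig:ruin-3}: the first coordinate $\NA$ evolves autonomously as a birth-and-death chain on $\{0,\dots,L\}$ --- up with probability $\rateA/(\rateA+\rateS)$, down with probability $\rateS/(\rateA+\rateS)$, ``reflecting'' at $L$ in the sense that it holds there with probability $\rateA/(\rateA+\rateS)$ --- while the second coordinate $\NB$ decreases by one exactly at those service steps that occur while $\NA=0$. In continuous time the self-loop at $L$ simply means the holding time there is $\Exp(\rateS)$, whereas every other state has holding time $\Exp(\rateA+\rateS)$. Started from a fixed $(\nA,\nB)$ the walk stays in the finite set $\{0,\dots,L\}\times\{0,\dots,\nB\}$ and reaches $(0,0)$ almost surely with finite mean time (the first coordinate is a finite irreducible chain, so it revisits $0$ infinitely often, and every visit lowers $\NB$ with probability $\rateS/(\rateA+\rateS)>0$); hence $\Htime(\nA,\nB)$ is the unique solution of the corresponding first-step system, and it suffices to identify it --- either by verifying that \eqref{eq:Etime} solves the hitting-time equations of \cref{Theo:hittingTime}, or, more transparently, via the decomposition below.

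First I would write $\Htime(\nA,\nB)=\Htime(\nA,0)+\nB\,D$, where $\Htime(\nA,0)$ is the expected time to bring the first coordinate from $\nA$ down to $0$ (with $\NB$ frozen) and $D$ is the expected time, started with first coordinate $0$, until $\NB$ decreases by one; this is exact by the strong Markov property, since after $\NA$ first hits $0$ the run splits into $\nB$ i.i.d.\ ``decrement cycles''. To get $\Htime(\cdot,0)$ I solve the scalar first-step recursion $\Htime(k,0)=\tfrac1{\rateA+\rateS}+\tfrac{\rateA}{\rateA+\rateS}\Htime(k+1,0)+\tfrac{\rateS}{\rateA+\rateS}\Htime(k-1,0)$ for $0<k<L$, together with $\Htime(L,0)=\tfrac1\rateS+\Htime(L-1,0)$ and $\Htime(0,0)=0$. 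For $\utilizationA\neq1$ the homogeneous solutions are $1$ and $\utilizationA^{-k}$ and a particular solution is linear in $k$; the two boundary conditions fix the constants and give $\Htime(k,0)=\tfrac1{\rateS(1-\utilizationA)}\bigl[k+\utilizationA^{L+1}(1-\utilizationA^{-k})/(1-\utilizationA)\bigr]$, which simplifies to $\Htime(1,0)=(1-\utilizationA^{L})/(\rateS(1-\utilizationA))$. For $\utilizationA=1$ one uses $1,k$ and a quadratic particular solution, obtaining $\Htime(k,0)=k(2L+1-k)/(2\rateS)$ and $\Htime(1,0)=L/\rateS$.

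Then I compute $D$ by a one-step renewal argument at $\NA=0$: after an $\Exp(\rateA+\rateS)$ holding time, with probability $\rateS/(\rateA+\rateS)$ the coordinate $\NB$ drops and we are done, while with probability $\rateA/(\rateA+\rateS)$ the first coordinate moves to $1$, must return to $0$ in expected time $\Htime(1,0)$, and the situation repeats; hence $D=\tfrac1{\rateA+\rateS}+\tfrac{\rateA}{\rateA+\rateS}\bigl(\Htime(1,0)+D\bigr)$, so $D=\tfrac1\rateS+\utilizationA\,\Htime(1,0)$, which simplifies to $(1-\utilizationA^{L+1})/(\rateS(1-\utilizationA))$ for $\utilizationA\neq1$ and to $(L+1)/\rateS$ for $\utilizationA=1$. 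Substituting $\Htime(\nA,0)$ and $D$ into $\Htime(\nA,\nB)=\Htime(\nA,0)+\nB D$ and rearranging then yields \eqref{eq:Etime}. I expect the only real work to be the bookkeeping --- solving the two-term recurrence with its mixed boundary conditions, carrying the $\utilizationA=1$ case in parallel throughout, and the final algebraic simplification --- since the one-dimensional reduction is precisely what makes everything elementary.
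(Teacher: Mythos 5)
Your proposal is correct and follows essentially the same route as the paper: the decomposition $\Htime(\nA,\nB)=\Htime(\nA,0)+\nB\,\Htime(0,1)$, a one-step renewal at $\NA=0$ giving $\Htime(0,1)=\tfrac1\rateS+\utilizationA\,\Htime(1,0)$, and the first-step system for $\Htime(\cdot,0)$ (the paper's $\Utime_{\nA}$ in \cref{le:UtimeFullyStrat}). The only cosmetic difference is that you solve the linear recurrence via homogeneous plus particular solutions while the paper telescopes the successive differences; the resulting formulas agree.
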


The proof of \cref{le:E-time} requires the following results.

\begin{lemma}
\label{le:UtimeFullyStrat}
Let $(\ncust_{\round})_{\round\geq0}$ be a random walk on the state space $\braces*{0,1,\dots,L}$, with initial state $\ncust_0=\nA$, and  transition probabilities given by \cref{fig:ruin-4}. 
Let $\Utime_{\nA}$ be the expected time the random walk takes to reach $0$.
Then, for $0\leq\nA\leq L$, we have
\begin{equation}
\label{eq:Utime}
\Utime_{\nA} =\begin{cases} \frac{\rateA+\rateS}{\rateS(1-\utilizationA)}\bracks*{\nA-\frac{\utilizationA^{L+1}}{1-\utilizationA}\parens*{\utilizationA^{-\nA}-1}} & \text{ if } \utilizationA\neq 1,\\
\\
\nA\parens*{2L+1-\nA} & \text{ if } \utilizationA=1.
\end{cases}
\end{equation}  
\end{lemma}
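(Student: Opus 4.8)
The plan is to characterize the vector $(\Utime_{\nA})_{0\le\nA\le L}$ as the solution of the linear system produced by first-step analysis, and then to solve that system in closed form. Conditioning on the first transition of $(\ncust_{\round})_{\round\ge0}$ and recalling that each transition in \cref{fig:ruin-4} has mean duration $1/(\rateA+\rateS)$, one obtains $\Utime_0=0$, the interior equations
\begin{equation*}
\Utime_{\nA}=\frac{1}{\rateA+\rateS}+\frac{\rateA}{\rateA+\rateS}\,\Utime_{\nA+1}+\frac{\rateS}{\rateA+\rateS}\,\Utime_{\nA-1}\qquad(0<\nA<L),
\end{equation*}
and, at the top state, the ``reflecting'' equation $\Utime_L=\tfrac{1}{\rateA+\rateS}+\tfrac{\rateA}{\rateA+\rateS}\Utime_L+\tfrac{\rateS}{\rateA+\rateS}\Utime_{L-1}$, i.e.\ $\Utime_L-\Utime_{L-1}=1/\rateS$. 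Since the state space is finite and $0$ is hit almost surely, \cref{Theo:hittingTime} (applied to the embedded discrete chain) guarantees that this system has a unique finite solution, so it suffices to exhibit one candidate and verify it.

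To solve the recurrence, multiply the interior equation by $\rateA+\rateS$ to get $\rateA\Utime_{\nA+1}-(\rateA+\rateS)\Utime_{\nA}+\rateS\Utime_{\nA-1}=-1$, whose characteristic polynomial $\rateA r^2-(\rateA+\rateS)r+\rateS$ has roots $1$ and $1/\utilizationA$. If $\utilizationA\ne1$ the roots are distinct, so $\Utime_{\nA}=A+B\,\utilizationA^{-\nA}+c\,\nA$, where the linear particular term is forced to have slope $c=1/(\rateS(1-\utilizationA))$ (the constant root $1$ makes a constant ansatz fail). The two boundary relations $\Utime_0=0$ and $\Utime_L-\Utime_{L-1}=1/\rateS$ then give $A=-B=\utilizationA^{L+1}/(\rateS(1-\utilizationA)^2)$, and collecting terms yields the first branch of \eqref{eq:Utime}. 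If $\utilizationA=1$ the root $1$ is double, so $\Utime_{\nA}=A+B\,\nA$ plus a quadratic particular solution $-\nA^2/(2\rateS)$; the same two boundary conditions give $A=0$ and $B=(2L+1)/(2\rateS)$, which is the second branch.

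None of this is deep; the only points needing care are the bookkeeping at the self-loop state $\nA=L$ and the separate handling of the degenerate case $\utilizationA=1$, where the double characteristic root forces one to pass from an exponential to a quadratic particular solution. As a sanity check, the resulting expression is exactly the $\nB=0$ specialization of \cref{le:E-time} (indeed $\Utime_{\nA}$ is the main ingredient in the proof of that lemma), and when $\utilizationA<1$ its leading term $\nA/(\rateS(1-\utilizationA))$ matches, up to the per-step time factor, the half-line expected hitting time given by \cref{cor:GamblersRuin}, the $\utilizationA^{L+1}$ correction being the contribution of the reflecting barrier at $L$.
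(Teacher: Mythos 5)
Your proof is correct. The setup is identical to the paper's: both condition on the first transition to obtain the same linear system for the expected hitting time of $0$ (the paper works with the embedded step count $\exphit_{\statei}$ and divides by $\rateA+\rateS$ at the end; you fold that factor in from the start), and both extract the boundary relation $\Utime_{L}-\Utime_{L-1}=1/\rateS$ from the self-loop at $L$. Where you diverge is in how the system is solved: the paper telescopes the first differences $\exphit_{\statei}-\exphit_{\statei-1}$ downward from the top boundary, obtaining an explicit double geometric sum which it then evaluates separately for $\utilizationA\neq 1$ and $\utilizationA=1$; you instead treat the interior equation as a second-order constant-coefficient recurrence, write the general solution from the characteristic roots $1$ and $1/\utilizationA$ plus a particular term, and fit the two boundary conditions. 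Both routes are routine and both must split on $\utilizationA=1$ (a repeated root for you, a degenerate geometric sum for the paper); yours avoids the double summation at the cost of a uniqueness argument, which your explicit and unique determination of $A$ and $B$ from the homogeneous part in fact already supplies, so the appeal to \cref{Theo:hittingTime} is not strictly needed. One bookkeeping remark: your derivation produces $\utilizationA^{-\nA}$ in the exponential correction term, not the $\utilizationA^{L-\nA}$ printed in \eqref{eq:Utime}; this is a typo in the lemma statement rather than an error on your part --- the paper's own computation \eqref{eq:U_na} and the restatement \eqref{eq:UtimeFullStrat} used in the proof of \cref{le:E-time} both carry $\utilizationA^{-\nA}$, and a direct check at $L=\nA=1$ (where $\Utime_{1}=1/\rateS$) confirms that your version is the correct one.
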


\begin{figure}[h!]
\begin{center}
\begin{tikzpicture}[->, >=stealth, auto, semithick, node distance=2cm]
\tikzstyle{every state}=[fill=white,draw=black,text=black]

\node[] (A) {0};
\node[] (B) [right of=A] {1};
\node[] (C) [right of=B] {};
\node[] (D) [right of=C] {$\nA$};
\node[] (E) [right of=D] {};
\node[] (F) [right of=E] {$L-1$};
\node[] (G) [right of=F] {$L$};

\path (A) edge [loop below] node {$\frac{\rateS}{\rateA+\rateS}$} (A)
      (A) edge [bend left, above] node {$\frac{\rateA}{\rateA+\rateS}$} (B)
      (B) edge [bend left, below] node {$\frac{\rateS}{\rateA+\rateS}$} (A)
      (B) edge [dashed, above] node {} (C)
      (C) edge [bend left, above] node {$\frac{\rateA}{\rateA+\rateS}$} (D)
      (D) edge [bend left, below] node {$\frac{\rateS}{\rateA+\rateS}$} (C)
      (D) edge [bend left, above] node {$\frac{\rateA}{\rateA+\rateS}$} (E)
      (E) edge [bend left, below] node {$\frac{\rateS}{\rateA+\rateS}$} (D)
      (E)  edge [dashed, above] node {} (F)
      (F) edge [bend left, above] node {$\frac{\rateA}{\rateA+\rateS}$} (G)
      (G) edge [bend left, below] node {$\frac{\rateS}{\rateA+\rateS}$} (F)
      (G)  edge [loop above] node {$\frac{\rateA}{\rateA+\rateS}$} (G);

\end{tikzpicture}
\end{center}
\caption{Bounded Random-walk}
  \label{fig:ruin-4}
\end{figure}
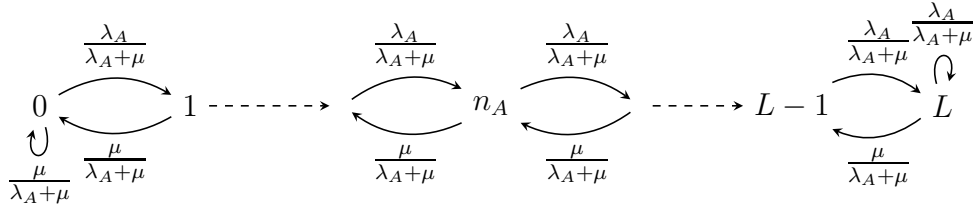

\begin{proof}
 Denoting by $\hittingt_{0}$ the hitting time of the state $0$, then  
\begin{equation}
\label{eq:U_n-A} 
\Utime_{\nA}=\Expect_{\nA}[\hittingt_{0}].
\end{equation}
For every $\statei$ in $\braces*{0,\dots,\naor}$, we define
\begin{equation}
\label{eq:h-i}
\exphit_{\statei} \coloneqq \Expect_{\statei}\bracks*{\hittingt_{0}}.
\end{equation}
Then, by \cref{Theo:hittingTime}, we have:
\begin{equation}
\label{eq:h-i-syst}  
\exphit_{\statei} = 
\begin{cases}
0 & \text{for $\statei=0$},\\
\displaystyle{1+\frac{\rateA}{\rateA+\rateS}\exphit_{\statei+1}+\frac{\rateS}{\rateA+\rateS}\exphit_{\statei-1}} & \text{for $0<\statei<L$},\\
\displaystyle{1+\frac{\rateA}{\rateA+\rateS}\exphit_{\naor}+\frac{\rateS}{\rateA+\rateS}\exphit_{\naor-1}} & \text{for $\statei=L$}.
\end{cases}
\end{equation}
As a consequence,
\begin{equation}
\label{eq:hna}
\exphit_{L}=\frac{\rateA+\rateS}{\rateS}+\exphit_{L-1}
\end{equation}
and, for every $0<\statei<L$,
\begin{equation}
\label{eq:h-i-h-i-1}
\begin{split}
\exphit_{\statei}-\exphit_{\statei-1} 
&=
\frac{\rateA+\rateS}{\rateS}+\frac{\rateA}{\rateS}(\exphit_{\statei+1}-\exphit_{\statei})\\
&=
\frac{\rateA+\rateS}{\rateS}\parens*{1+\frac{\rateA}{\rateS}}+\parens*{\frac{\rateA}{\rateS}}^2(\exphit_{\statei+2}-\exphit_{\statei+1})\\
&=
\frac{\rateA+\rateS}{\rateS}\sum_{\round=0}^{L-\statei-1}\parens*{\frac{\rateA}{\rateS}}^{\round} + \parens*{\frac{\rateA}{\rateS}}^{L-\statei}(\exphit_{L}-\exphit_{L-1})\\
&=
\frac{\rateA+\rateS}{\rateS}\sum_{\round=0}^{L-\statei}\parens*{\frac{\rateA}{\rateS}}^{\round}\\
&=
\frac{\rateA+\rateS}{\rateS}\sum_{\round=0}^{L-\statei}\utilizationA^{\round}.\\
\end{split}
\end{equation}
Then, for every $0<\statei<L$,
\begin{equation}
\label{eq:h-i-expr}
\begin{split}
\exphit_{\statei}
&=
 \exphit_{\statei-1}+ \frac{\rateA+\rateS}{\rateS}\sum_{\round=0}^{L-\statei}\utilizationA^{\round}=
 \frac{\rateA+\rateS}{\rateS}\sum_{j=0}^{i-1}\sum_{\round=0}^{L-(\statei-j)}\utilizationA^{\round}.\\
\end{split}
\end{equation}

\noindent Assume now that $\utilizationA\neq 1$. If $0\leq \nA<L$, we have
\begin{equation}\label{eq:U_na}
\begin{split}
\Utime_{\nA}
&=
\frac{\rateA+\rateS}{\rateS}\sum_{j=0}^{\nA-1}\frac{1-\utilizationA^{L-\nA+j+1}}{1-\utilizationA}\\
&=
\frac{\rateA+\rateS}{\rateS-\rateA}\sum_{j=0}^{\nA-1}\parens*{1-\utilizationA^{L-\nA+j+1}}\\
&=
\frac{\rateA+\rateS}{\rateS-\rateA}\bracks*{\nA-\utilizationA^{L+1}\sum_{j=0}^{\nA-1}\utilizationA^{-(\nA-j)}}\\
&=
\frac{\rateA+\rateS}{\rateS-\rateA}\bracks*{\nA-\utilizationA^{L+1}\sum_{\round=1}^{\nA}\utilizationA^{-\round}}\\
&=
\frac{\rateA+\rateS}{\rateS-\rateA}\bracks*{\nA-\utilizationA^{L+1}\frac{\utilizationA^{-1}-\utilizationA^{-(\nA+1)}}{1-\utilizationA^{-1}}}\\
&=
\frac{\rateA+\rateS}{\rateS-\rateA}\bracks*{\nA-\frac{\utilizationA^{L+1}}{1-\utilizationA}\parens*{\utilizationA^{-\nA}-1}}\\
\end{split}
\end{equation}

If $\nA=L$, using \eqref{eq:hna} and \eqref{eq:U_na}, we have
\begin{equation}
\label{eq:U-naor}
\begin{split}    
\Utime_{L}
&=
\frac{\rateA+\rateS}{\rateS}+\frac{\rateA+\rateS}{\rateS-\rateA}\bracks*{L-1-\frac{\utilizationA^{L+1}}{1-\utilizationA}\parens*{\utilizationA^{-(L-1)}-1}}\\
&=
\frac{\rateA+\rateS}{\rateS-\rateA} \bracks*{L+\frac{\rateS-\rateA}{\rateS}-1-\frac{\utilizationA^{L+1}}{1-\utilizationA}\parens*{\utilizationA^{-(L-1)}-1}}\\
&=
\frac{\rateA+\rateS}{\rateS-\rateA} \bracks*{L-\utilizationA-\frac{\utilizationA^{L+1}}{1-\utilizationA}\parens*{\utilizationA^{-(L-1)}-1}}\\
&=
\frac{\rateA+\rateS}{\rateS-\rateA} \bracks*{L-\frac{\utilizationA^{L+1}}{1-\utilizationA}\parens*{\utilizationA^{-(L-1)}-1+(1-\utilizationA)\utilizationA^{-L}}}\\
&= 
\frac{\rateA+\rateS}{\rateS-\rateA} \bracks*{L-\frac{\utilizationA^{L+1}}{1-\utilizationA}\parens*{\utilizationA^{-L}-1}}.
\end{split}
\end{equation}
This concludes the proof for $\utilizationA\neq 1$.

Assume now that $\utilizationA=1$. If $0\leq\nA<L$, by \eqref{eq:h-i-expr}

\begin{equation}
    \label{eq:U-n-a-1}
\begin{split}
U_{\nA}
&=
2\sum_{\round=0}^{\nA-1}(L-\nA+\round+1)\\
&=
2\parens*{(L+1)\nA+\sum_{\round=0}^{\nA-1}(\round-\nA)}\\
&=
2\parens*{(L+1)\nA-\sum_{\round=1}^{\nA}\round}\\
&=
2\parens*{(L+1)\nA-\frac{\nA(\nA+1)}{2}}\\
&=
\parens*{2(L+1)\nA-\nA(\nA+1)}\\
&=
\nA\parens*{2L+1-\nA}.
\end{split}
\end{equation}
If $\nA=L$,  using \eqref{eq:hna} and \eqref{eq:U-n-a-1} we have
\begin{equation}
\begin{split}
U_{L}
&=
2+(L-1)\parens*{L+2}\\
&=
L\parens*{L+1}.
\end{split}
\end{equation}
This conclude the proof for $\utilizationA=1$.
\end{proof}

\begin{proof}[Proof of \cref{le:E-time}]
    We begin noting that:
\begin{equation}
\label{eq:EtimePreempFS}
\begin{split}
\Htime(\nA,\nB)&=\Htime(\nA,\nB-1)+\Htime(0,1)\\
&=\Htime(\nA,0)+\nB\Htime(0,1)\\
&=\Utime_{\nA}+\nB\Htime(0,1).   
\end{split}
\end{equation}
Then, by  \cref{le:UtimeFullyStrat} we have

\begin{equation}
\label{eq:UtimeFullStrat}
\Utime_{\nA} =\begin{cases} \frac{\rateA+\rateS}{\rateS(1-\utilizationA)}\bracks*{\nA-\frac{\utilizationA^{L+1}}{1-\utilizationA}\parens*{\utilizationA^{-\nA}-1}} & \text{ if } \utilizationA\neq 1,\\
\\
\nA\parens*{2L+1-\nA} & \text{ if } \utilizationA=1.
\end{cases}
\end{equation}

To compute $\Htime(0,1)$ we need to consider two cases: either with probability $\rateS/(\rateS+\rateA)$ $\NB$ decreases by $1$,  or $\NA$ increases by $1$ with probability $\rateA/(\rateS+\rateA)$). 
Then,
\begin{equation}
\label{eq:E(0,1)}
\begin{split}    \Htime(0,1)&=\frac{\rateS}{\rateA+\rateS}(1+H(0,0))+\frac{\rateA}{\rateS+\rateA}(1+\Htime(1,1))\\
&=1+\frac{\rateA}{\rateS+\rateA} \parens*{\Utime_1+\Htime(0,1)},
\end{split}
\end{equation}
from which we get 
\begin{align}
\Htime(0,1)\parens*{1-\frac{\rateA}{\rateS+\rateA}}
&=1+\frac{\rateA}{\rateS+\rateA}\Utime_1 \\
\Htime(0,1)
&=\frac{\rateA+\rateS}{\rateS}+\utilizationA\Utime_1.
\end{align}
Hence, by \eqref{eq:UtimeFullStrat}, if $\utilizationA\neq 1$, we have

\begin{equation}\label{Eq:E-(0,0)-1}
\begin{split}
\Htime(0,1)& = (\rateA+\rateS)\bracks*{\frac{1}{\rateS}+\frac{\utilizationA}{\rateS-\rateA} \parens*{1 - \frac{\utilizationA^{L+1}(\utilizationA^{-1}-1)}{1-\utilizationA}}}\\
&=
(\rateA+\rateS)\bracks*{\frac{1}{\rateS}+\frac{1}{\rateS-\rateA}\frac{\utilizationA}{1-\utilizationA} \parens*{1 -\utilizationA -\utilizationA^{L}+\utilizationA^{L+1}}}\\
&=
(\rateA+\rateS)\bracks*{\frac{1}{\rateS}+\frac{1}{\rateS-\rateA}\frac{\utilizationA}{1-\utilizationA} \parens*{1 -\utilizationA}\parens*{1-\utilizationA^{L}}}\\
&=
(\rateA+\rateS)\frac{\rateS-\rateA+\rateA\parens*{1-\utilizationA^{L}}}{\rateS(\rateS-\rateA)}\\
&=
\frac{(\rateA+\rateS)\parens*{1-\utilizationA^{L+1}}}{\rateS-\rateA}
\end{split}
\end{equation}
Plugging \eqref{Eq:E-(0,0)-1} and \eqref{eq:UtimeFullStrat} into \eqref{eq:EtimePreempFS}, for $\utilizationA\neq 1$ we obtain
\begin{equation*}
\Htime(\nA,\nB)= \frac{\rateA+\rateS}{(\rateS-\rateA)} \bracks*{\nB\parens*{1 - \utilizationA^{L+1}}+  \nA-\frac{\utilizationA^{L+1}}{1-\utilizationA} \parens*{\utilizationA^{-\nA}-1}}.    
\end{equation*}

If $\utilizationA=1$ we have
\begin{equation}\label{eq:E-(0,0)-2}
\Htime(0,1)=2(L+1),
\end{equation}
and
\begin{equation}
    \Htime(\nA,\nB)=\nA(2L+1-\nA)+2\nB(1+L).
\end{equation}

\end{proof}
\end{color}

\begin{proof}[Proof of \cref{cl:G-increasing}]
We show that $G'(\nAB)\geq 0$ for $\Veq\leq \nAB \leq \Teq-1$.
\begin{align}
G'(\nAB)=&\frac{1}{\rateS(1-\utilizationA)}\cdot\frac{\parens*{1-\utilizationA^{\naor+1}+(\naor+1)\utilizationA^{\Teq-\nAB}\ln{\utilizationA}}\parens*{1-\utilizationA^{\Teq-\nAB}}}{\parens*{1-\utilizationA^{\Teq-\nAB}}^2}\nonumber\\
&- \frac{1}{\rateS(1-\utilizationA)}\cdot\frac{\utilizationA^{\Teq-\nAB}\ln{\utilizationA}\parens*{(\nAB+1-\Veq)\parens*{1-\utilizationA^{\naor+1}}-(\naor+1)\parens*{\utilizationA^{\Teq-\nAB}-\utilizationA^{\naor+1}}}}{\parens*{1-\utilizationA^{\Teq-\nAB}}^2}\nonumber\\
=&\frac{\parens*{1-\utilizationA^{\naor+1}}\parens*{1-\utilizationA^{\Teq-\nAB}}+(\naor+1)\utilizationA^{\Teq-\nAB}\ln{\utilizationA}}{\rateS(1-\utilizationA)\parens*{1-\utilizationA^{\Teq-\nAB}}^2}\nonumber\\
&-\frac{\utilizationA^{\Teq-\nAB}\ln{\utilizationA}\parens*{(\nAB+1-\Veq)\parens*{1-\utilizationA^{\naor+1}}+(\naor+1)\utilizationA^{\naor+1}}}{\rateS(1-\utilizationA)\parens*{1-\utilizationA^{\Teq-\nAB}}^2}\nonumber\\
=&\frac{\parens*{1-\utilizationA^{\naor+1}}\bracks*{\parens*{1-\utilizationA^{\Teq-\nAB}}+(\naor+1)\utilizationA^{\Teq-\nAB}\ln{\utilizationA}-\utilizationA^{\Teq-\nAB}\ln{\utilizationA}(\nAB+1-\Veq)}}{\rateS(1-\utilizationA)\parens*{1-\utilizationA^{\Teq-\nAB}}^2}\nonumber\\
=&\frac{1-\utilizationA^{\naor+1}}{\rateS(1-\utilizationA)}\cdot\frac{1-\utilizationA^{\Teq-\nAB}+(\Teq-\nAB)\utilizationA^{\Teq-\nAB}\ln{\utilizationA}}{\parens*{1-\utilizationA^{\Teq-\nAB}}^2}\nonumber\\
=&\frac{1-\utilizationA^{\naor+1}}{\rateS(1-\utilizationA)}\cdot\frac{1-\utilizationA^{\Teq-\nAB}\parens*{1-\ln{\utilizationA^{\Teq-\nAB}}}}{\parens*{1-\utilizationA^{\Teq-\nAB}}^2}.\nonumber
\end{align}
Now, with the change of variable $x=\utilizationA^{\Teq-\nAB}$, the result follows from the following inequality:
\begin{equation*}
\label{eq:inequality-x} 
1-x(1-\ln{x}) \geq 0,\quad\text{for } x>0.
\qedhere
\end{equation*}  
\end{proof}


 

\end{document}